\newcommand{\var}{\operatorname{Var}}
\newcommand{\R}{\mathbb{R}}
\definecolor{codegreen}{rgb}{0,0.6,0}
\definecolor{codegray}{rgb}{0.5,0.5,0.5}
\definecolor{codepurple}{rgb}{0.58,0,0.82}
\definecolor{backcolour}{rgb}{0.95,0.95,0.92}
\definecolor{Gray}{gray}{0.8}
\definecolor{lightyellow}{RGB}{245,238,197}
\newcolumntype{a}{>{\columncolor{lightyellow}}c}
\newcolumntype{b}{>{\columncolor{backcolour}}c}
\renewcommand\arraystretch{1.2}
\newcommand{\figwidth}{0.45\textwidth}
\lstdefinestyle{mystyle}{
    backgroundcolor=\color{backcolour},
    commentstyle=\color{codegreen},
    keywordstyle=\color{magenta},
    numberstyle=\tiny\color{codegray},
    stringstyle=\color{codepurple},
    basicstyle=\ttfamily\footnotesize,
    breakatwhitespace=false,
    breaklines=true,
    captionpos=b,
    keepspaces=true,
    numbers=left,
    numbersep=5pt,
    showspaces=false,
    showstringspaces=false,
    showtabs=false,
    tabsize=2
}
\newtheorem{definition}{Definition}
\newtheorem{proposition}{Proposition}
\begin{document}
\begin{frontmatter}
\title{Probabilistic Reconciliation of Count Time Series}
\author[add1]{Giorgio Corani}
\ead{giorgio.corani@idsia.ch}
\author[add1]{Dario Azzimonti}
\ead{dario.azzimonti@idsia.ch}
\author[add1]{Nicolo Rubattu}
\ead{nicolo.rubattu@supsi.ch}
\address[add1]{IDSIA\\ Dalle Molle Institute for Artificial Intelligence\\ USI-SUPSI \\
	CH-6962 Lugano, Switzerland}

\begin{abstract}
	Forecast reconciliation is an important  research topic. 
	Yet, there is currently neither  formal framework nor practical method
	for the  probabilistic reconciliation of count time series.
	In this paper we  propose a  definition of coherency  and reconciled probabilistic forecast  which applies to both real-valued and count variables and a novel method  for probabilistic reconciliation. It is based on a generalization of Bayes' rule and it can reconcile both real-value and count variables. When applied to count variables, it yields a reconciled probability mass function.
	Our experiments with the temporal reconciliation of count variables show  a major forecast improvement
	compared to  the   probabilistic Gaussian reconciliation. 
\end{abstract}
\end{frontmatter}

\section{Introduction}

Time series are often organized into a hierarchy.
For example, the total sales of a product in a country can be divided into regions and then into sub-regions. Forecasts of hierarchical time series should be \textit{coherent}; for instance, the sum of the forecasts of the different regions should be equal to the forecast for the entire country. Forecasts are incoherent if they do not satisfy such constraints. 
\textit{Reconciliation} methods \citep{hyndman2011optimal,wickramasuriya2019optimal}
compute coherent forecasts  by combining the \textit{base forecasts} generated independently for each time series, possibly incorporating non-negativity constraints \citep{wickramasuriya2020non-negative}.
Reconciled forecasts are generally more accurate than the base forecasts;
indeed, forecast reconciliation is related to forecast combination  \citep{hollyman2021understanding, di2022forecast}.

A special case of reconciliation is constituted by
\textit{temporal hierarchies} \citep{athanasopoulos2017_temporal},
which reconcile  base forecasts computed for the same variable at different frequencies  (e.g., monthly, quarterly and yearly); they generally  improve the forecasts \citep{KOURENTZES-elucidate}  of smooth and  intermittent time  series.

As for probabilistic reconciliation,
\cite{panagiotelis2022} proposed a seminal framework which  interprets  reconciliation as a projection. 
Other methods for  probabilistic reconciliation have been proposed \citep{jeon2019probabilistic, corani_reconc, taieb2021hierarchical, rangapuram2021end}, but none of them
reconciles count variables. 

Our first contribution is 
the  definition of coherency and reconciliation   for
count variables.
Then, we propose a novel approach for probabilistic  reconciliation,  based on conditioning.
As a first step, our   method 
computes a joint distribution on the entire hierarchy, using as  source of information
the base forecast of the bottom variables; this is the  \textit{probabilistic bottom-up} reconciliation.
Then it  updates the joint distribution by conditioning on the information contained in the base forecast of the upper variables, 
using the method of \emph{virtual evidence} \citep{pearl1988probabilistic, chan2005revision}.  
Our approach can  reconcile  both real-value and count variables; in this paper however we focus on count variables.
In this case, we obtain a  reconciled probability mass function defined  over counts.
We show extensive experiments on the temporal reconciliation of  count time series, reporting major empirical improvements compared to  probabilistic reconciliation based on  Gaussian  assumptions. 

The paper is organized as follows:
Section~\ref{sec:t-hier} reviews temporal hierarchies; in Section~\ref{sec:probabilistic} we propose a definition of  coherent and reconciled forecasts which applies to both continuous and real-valued variables. In Section~\ref{sec:reconc_virtual_ev} we describe  our reconciliation method
and in Section~\ref{sec:experiments} we present the experimental results. 

\section{Temporal Hierarchies}
\label{sec:t-hier}

Temporal hierarchies \citep{athanasopoulos2017_temporal}
enforce coherence between forecasts generated at different temporal scales. 
For instance the temporal hierarchy of Figure~\ref{fig:hierExample} is built on top
of a quarterly time series observed over $t$ years, with observations $q_1, \ldots, q_{4}, q_{5}, \ldots, q_{4\cdot t}$.
The bottom level of the hierarchy contains quarterly observations grouped in  vectors $\mathcal{Q}_j = [q_k : k \mod 4=j]$, $j=1,\ldots, 4$; the semi-annual observations are grouped in the vectors $\mathcal{S}_j = [ s_\ell: \ell \mod 2 =j ]$ where $s_{2i-1} = q_{4(i-1)+1}+q_{4(i-1)+2}$, $s_{2i} = q_{4(i-1)+3}+q_{4i}$ for $i=1, \ldots, t$; finally the annual observations are grouped as $\mathcal{Y}=[a_1,\ldots, a_t]$, where $a_{i}=s_{2i-1}+s_{2i}$, for $i=1,\ldots,t$.

\begin{figure*}[ht!]
	\centering
	\begin{tikzpicture}[level/.style={sibling distance=40mm/#1}, , scale=0.9]
		\node [circle,draw] (z){\textit{$\:\:\mathcal{Y}\:\:$}}
		child {node [circle,draw] (a) {\textcolor{white}{r}$\,\mathcal{S}_1\,$\textcolor{white}{r}}
			child {node [circle,draw, fill=lightyellow] (b) {$\,\mathcal{Q}_{1}\,$}
			}
			child {node [circle,draw, fill=lightyellow] (g) {$\,\mathcal{Q}_{2}\,$}
			}
		}
		child {node [circle,draw] (j) {\textcolor{white}{r}$\mathcal{S}_2$\textcolor{white}{r}}
			child {node [circle,draw, fill=lightyellow] (k) {$\,\mathcal{Q}_{3}\,$}
			}
			child {node [circle,draw, fill=lightyellow] (l) {$\,\mathcal{Q}_{4}\,$}
			}
		};
	\end{tikzpicture}
	\caption{Temporal hierarchy built on top of a quarterly time series.}
	\label{fig:hierExample}
\end{figure*}

We denote by $\mathbf{b}$ the vector of bottom observations, e.g., $\mathbf{b} = [\mathcal{Q}_1^T , \mathcal{Q}_2^T , \mathcal{Q}_3^T, \mathcal{Q}_4^T]^T$, and by $\mathbf{u}$ the vector of upper observations, e.g., $\mathbf{u} = [\mathcal{Y}^T, \mathcal{S}_1^T, \mathcal{S}_2^T]^T$.
We denote by $\mathbf{y}$ 
the vector containing all the observations of the temporal hierarchy, e.g.,
$\mathbf{y}=  [\mathbf{u} \ \mathbf{b}] = [\mathcal{Y}^T, \mathcal{S}_1^T, \mathcal{S}_2^T, \mathcal{Q}_1^T, \mathcal{Q}_2^T , \mathcal{Q}_3^T, \mathcal{Q}_4^T]^T$. 
We denote by  $m$  the number of bottom  observations and by $n$ the total number of observations in the hierarchy.

A hierarchy is characterized by its summing matrix $\mathbf{S} \in \R^{n\times m}$ that defines the relationship between $
\mathbf{b}$ and $\mathbf{y}$, i.e.
\begin{equation*}
	\mathbf{y} = \mathbf{S} \mathbf{b} .
	\label{eq:Reconciliation}
\end{equation*}
The $\mathbf{S}$ matrix of hierarchy in Figure~\ref{fig:hierExample} is:
\newcommand\bigA{\makebox(0,0){\text{\huge{A}}}}
\newcommand\bigI{\makebox(0,0){\text{\huge{I}}}}
\begin{equation*}
	\mathbf{S} = \begin{bmatrix}
		1 & \,\,\, & 1 & \,\,\, & 1 & \,\,\, & 1\\
		1 & \,\,\, & 1 & \,\,\, & 0 & \,\,\, & 0 \\
		0 & \,\,\, & 0 & \,\,\, & 1 & \,\,\, & 1 \\
		1 & \,\,\, & 0 & \,\,\, & 0 & \,\,\, & 0 \\
		0 & \,\,\, & 1 & \,\,\, & 0 & \,\,\, & 0 \\
		0 & \,\,\, & 0 & \,\,\, & 1 & \,\,\, & 0 \\
		0 & \,\,\, & 0 & \,\,\, & 0 & \,\,\, & 1 \\
	\end{bmatrix}
	=
	\begin{bmatrix}
		& \,\,\, &  & \,\,\,   &        & \,\,\, &  \\
		& \,\,\, &  & \bigA   &        & \,\,\, &  \\
		& \,\,\, &  & \,\,\, & \,\,\, &        &  \\  \cdashline{2-6} \
		& \,\,\, &  & \,\,\,   &        & \,\,\, &  \\
		& \,\,\, &  & \bigI   &        & \,\,\, &  \\
		& \,\,\, &  & \,\,\,   &        & \,\,\, &  \\
		& \,\,\, &  & \,\,\,   &        & \,\,\, &
	\end{bmatrix},
	\label{eq:matrixS}
\end{equation*}
\noindent
where  $\mathbf{A} \in \mathbb{R}^{(n-m)\times m}$ encodes which bottom time series should be summed up in order to obtain each upper time series.

\paragraph{Reconciling temporal hierarchies}\label{sec:temporal-hier}
Let us denote by $h$ the forecast horizon, expressed in years.
For instance, $h$=1 implies four
forecasts at the quarterly level, two forecasts at the bi-annual level and one forecast at the yearly level. 

Let us denote by $\hat{\mathbf{u}}_h$ and $\hat{\mathbf{b}}_h$ the \textit{base point forecasts} for the upper and bottom levels of the hierarchy.
The vector of the base point forecasts for the entire hierarchy is $\hat{\mathbf{y}}_h = \begin{bmatrix}
	\hat{\mathbf{u}}_h \\
	\hat{\mathbf{b}}_h
\end{bmatrix}$, 

The base forecast are \textit{incoherent}, i.e., $\hat{\mathbf{y}}_h \neq \mathbf{S}\hat{\mathbf{b}}_h$.
Optimal \textit{reconciliation} methods \citep{hyndman2011optimal,wickramasuriya2019optimal}
adjust the forecast for the bottom level and  sum them up in order to obtain the forecast for the upper levels.
The reconciled  forecasts of the bottom time series and the entire hierarchy are:
\begin{align}
	\tilde{\mathbf{b}}_h &= \mathbf{G} \hat{\mathbf{y}}_h \\
	\tilde{\mathbf{y}}_h & = \mathbf{S}\tilde{\mathbf{b}}_h = \mathbf{SG} \hat{\mathbf{y}}_h.
	\label{eq:s_ytilde}
\end{align}

The core of the minT algorithm \citep{wickramasuriya2019optimal} is
the following expression for  $\mathbf{G}$, which minimizes the mean squared error of  the coherent forecast:
\begin{equation}
	\mathbf{G} = (\mathbf{S}^T \mathbf{W}^{-1} \mathbf{S})^{-1} \mathbf{S}^T \mathbf{W}^{-1}\,,
	\label{eq:PMinTEstimator}
\end{equation}
where $\mathbf{W}$ is the  covariance matrix of the errors of the base forecast. 
The covariance of the reconciled forecasts  is \citep{wickramasuriya2019optimal}
\begin{equation}
	\var(\tilde{\mathbf{y}})=\mathbf{S}  \mathbf{G} \mathbf{W} \mathbf{G}^T \mathbf{S}^T = \mathbf{S}  (\mathbf{S}^T \mathbf{W}^{-1} \mathbf{S})^{-1} \mathbf{S}^T\,.
	\label{eq:varianceMinT}
\end{equation}

In temporal hierarchies,  $\mathbf{W}$ is generally assumed to be diagonal, but it can be defined in different ways.
For instance, \textit{hierarchy variance} \citep{athanasopoulos2017_temporal} adopts the same variances of the base forecasts, allowing heterogeneity within each level of the hierarchy.
For the hierarchy of Fig.\ref{fig:hierExample} it yields:
\begin{equation*}
	\mathbf{W} = \text{diag}(\hat{\sigma}_Y^2, \hat{\sigma}_{S_1}^2, \hat{\sigma}_{S_2}^2,\hat{\sigma}_{Q_1}^2, \ldots, \hat{\sigma}_{Q_4}^2)\,.
	\label{eq:diagW}
\end{equation*}
Instead,  \textit{structural scaling}   \citep{athanasopoulos2017_temporal}
defines $\mathbf{W}$ by assuming: i) the forecasts  of the same level to have the same variance; ii) the variance at each level to be proportional to the number of bottom time series that are summed up in that level. For the hierarchy of Fig.\ref{fig:hierExample}, it yields:
\begin{equation*}
	\mathbf{W} = \text{diag}(4,2,1,1,1,1)\,.
	\label{eq:strucScalingW}
\end{equation*}

\section{Probabilistic Reconciliation}
\label{sec:probabilistic}
The methods discussed so far reconcile the \emph{point forecasts}. 
In the following we review the most important methods for probabilistic reconciliation. 

\citet{jeon2019probabilistic} propose different heuristics (based on minT) for probabilistic reconciliation, one of which is equivalent to reconciling a large number of forecast quantiles.
The algorithm by \cite{taieb2021hierarchical}  yields coherent probabilistic forecasts whose expected value match the mean of  MinT; yet this method does not consider the variance of the base forecast of the upper variables.
\cite{rangapuram2021end} propose a deep neural network model which produces coherent probabilistic forecasts without  any post-processing step, 
by incorporating  reconciliation within a single trainable model.

\cite{corani_reconc} shows that probabilistic reconciliation can be accomplished via Bayes' rule.
First they create a joint predictive distribution for the entire hierarchy, based on the probabilistic base forecast of the bottom time series.
The distribution is then updated in order to accommodate the information contained  in the base forecast of the upper time series.
Under the  Gaussian assumption they obtain the reconciled Gaussian distribution in closed form. The reconciled  mean and variance are  equivalent to those of  minT, despite the different derivation strategy. Also \citep{hollyman2022hierarchies} propose a Bayesian viewpoint of  the reconciliation process.

\citet{panagiotelis2022} proposes a  definition of probabilistic reconciliation
based on projection  and an algorithm  which obtains the reconciled distribution  by minimizing a  scoring rule. However this requires  optimizing via   stochastic gradient descent the  $m \times n$ elements of $\mathbf{G}$, which structurally limits its  scalability.

There is currently no method for the reconciliation of count variables.
To address this problem, we  first
extend to count variables the key definitions of~\citet{panagiotelis2022}.

\subsection{Coherence and reconciliation according to \cite{panagiotelis2022} }
Recalling that  $m$ and $n$ denote  the number of  bottom and total time series,  matrix $\mathbf{S}$ can be seen as a function $s: \mathbb{R}^m \rightarrow \mathbb{R}^n$ which associates to a bottom vector $\mathbf{b}\in \mathbb{R}^m$ the coherent vector  $s(\mathbf{b}) = \mathbf{S}\mathbf{b} \in \mathbb{R}^n$. The $n$-dimensional coherent vectors lie in the vector subspace $\mathfrak{s}$ (spanned by the columns of $\mathbf{S}$), which is well-defined in $\mathbb{R}^n$.
The base forecasts of the bottom time series  can be represented by a probability triple $(\mathbb{R}^m, \mathcal{F}_{\mathbb{R}^m}, \nu)$, where $\mathcal{F}_{\mathbb{R}^m}$ is the (Borel) $\sigma$-algebra associated with $\mathbb{R}^m$. 
\begin{definition}
	\citep{panagiotelis2022} 
	A probability triple $(\mathfrak{s}, \mathcal{F}_{\mathfrak{s}}, \check{\nu}_\mathfrak{s})$ is  coherent with the bottom probability triple $(\mathbb{R}^m, \mathcal{F}_{\mathbb{R}^m}, \nu)$ if: \\
	\begin{equation}
		\check{\nu}_\mathfrak{s}(s(\mathcal{B})) = \nu(\mathcal{B}), \quad \forall \mathcal{B} \in \mathcal{F}_{\mathbb{R}^m},
		\label{eq:panag_coherence}
	\end{equation}
	\label{def:coherence_pan}
\end{definition}
Definition~\ref{def:coherence_pan} implies that incoherent vectors have zero probability under the probability measure $\check{\nu}_{\mathfrak{s}}$.

\subsection{Extension to count variables}
\begin{figure}[!h]
	\centering
	\includegraphics[width=0.80\linewidth]{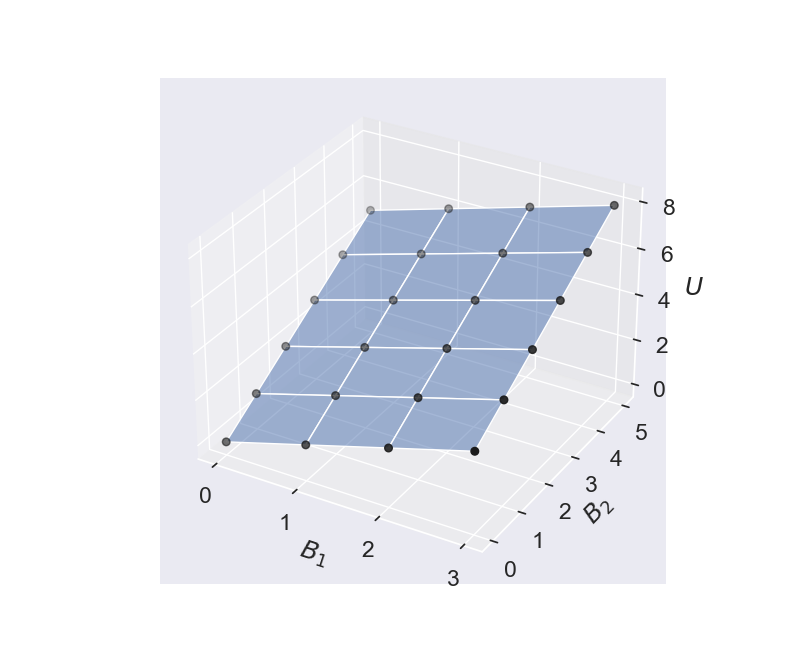}
	\caption{
		$B_1$ and $B_2$ are  two count variables, with
		$U = B_1+B_2$. The black points
		constitute the set of coherent vectors.}
	\label{fig:sB}
\end{figure}

A count variable takes non-negative integer values $\mathbb{N} = \{0, 1, 2, 3, ...\}$.
We denote by $\mathbf{a} \in \mathbb{N}^k$ an array of $k$ non-negative integers which we call vectors, with a slight abuse of notation. 
We have vectors  $\mathbf{y} \in \mathbb{N}^n$, $\mathbf{b} \in \mathbb{N}^m$ and  $\mathbf{u} \in \mathbb{N}^{n-m}$.
The set of coherent vectors in $\mathbb{N}^n$ is:
\begin{equation}
	s(\mathbb{N}^m) = \{ \mathbf{y} \in \mathbb{N}^n : \exists \mathbf{b} \in \mathbb{N}^m \text{ such that } \mathbf{y}= s(\mathbf{b})  \}.
	\label{eq:s(Nm)}
\end{equation}

Eq. \eqref{eq:s(Nm)}  defines the 
subset of coherent vectors, see Fig.\ref{fig:sB} for a graphical representation. 
Indeed, no vector subspace can be defined with count variables.
We can now extend to counts the definition of coherence:
\begin{definition}
	A probability triple $(s(\mathcal{X}^m), \mathcal{F}_{s(\mathcal{X}^m)}, \check{\nu})$ is coherent with the bottom probability triple
	$(\mathcal{X}^m, \mathcal{F}_{\mathcal{X}^m}, \nu)$ if:
	\begin{equation}
		\check{\nu}(s(\mathcal{B})) = \nu(\mathcal{B}), \quad \forall \mathcal{B} \in \mathcal{F}_{\mathcal{X}^m},
		\label{eq:coherence}
	\end{equation}
	where $\mathcal{X}^m=\mathbb{R}^m$ in the continuous case and $\mathcal{X}^m= \mathbb{N}^m$ in the discrete case. 
	\label{def:coherence}
\end{definition}
Definitions \ref{def:coherence_pan} and \ref{def:coherence} are  equivalent in the continuous case, as we prove in~\ref{sec:equivalentDefs}.
However, definition \ref{def:coherence} applies also to count variables, in which case  $\check{\nu}$ and $\nu$ are discrete probability distributions. Denoting by $\check{p}$ and $p$ their  probability mass functions, we can write Eq.~\eqref{eq:coherence} as:
\begin{equation}
	\check{p}(b_1, \ldots, b_m, \mathbf{u} = \mathbf{A}\mathbf{b}) = p(b_1, \ldots, b_m),    \qquad \text{ for all } b_1, \ldots, b_m \in \mathbb{N} 
	\label{eq:coherence_discrete}
\end{equation}
Equation \eqref{eq:coherence_discrete} assigns zero probability to  vectors of counts  which are incoherent.

\begin{definition}
	Consider a probabilistic base forecast for $\mathbf{y}$, constituted by the probability triple $(\mathcal{X}^n, \mathcal{F}_{\mathcal{X}^n}, \hat{\nu})$. A reconciled probability distribution $\tilde{\nu}$ is a transformation of the forecast probability measure $\hat{\nu}$ which is coherent and defined on $\mathcal{F}_{s(\mathcal{X}^m)}$. 
	\label{def:reconciled_meas}
\end{definition}

Our definition of reconciled probability triple is more general than that of ~\citet{panagiotelis2022}, which  requires a reconciliation function, 
defined in their algorithm as an affine map. 
This can lead to negatively reconciled forecasts, which is not admissible with  count variables.

\section{Reconciliation based on virtual evidence}\label{sec:reconc_virtual_ev}
We assume the forecasts over counts to be constituted by probability mass functions (pmfs). 
We denote by $\hat{p}$ and  $p_{BU}$ the pmf of the base forecasts and the probabilistic bottom-up respectively. The reconciled pmf is denoted by $\tilde{p}$.

The first step of our  algorithm is to create a joint  pmf for the entire hierarchy using $\hat{p}(\mathbf{b})$ as  the only source of information; this is the \textit{probabilistic bottom-up}.
In order to formalize it, we need an indicator function which selects  the suitable vectors $\mathbf{u}$:
$$\mathds{1}_{\mathbf{A}\mathbf{b}}(\mathbf{u})= \begin{cases}1 &\text{if }  \mathbf{u} = \mathbf{A}\mathbf{b}\\ 0 & \text{otherwise}\end{cases}.$$ 
The pmf $p_{BU}(\mathbf{y})$ is:
\begin{equation*}
	p_{BU}(\mathbf{y}) = p_{BU} (\mathbf{u},\mathbf{b}) = \ \mathds{1}_{\mathbf{A}\mathbf{b}}(\mathbf{u}) \hat{p}(\mathbf{b}).
\end{equation*}
The pmf $p_{BU}(\mathbf{y})$ assigns positive probability only to vectors $\mathbf{y} \in s(\mathcal{X}^m)$ because of the indicator $\mathds{1}_{\mathbf{A}\mathbf{b}}(\mathbf{u})$; it is thus  coherent.

\subsection{Conditioning on uncertain evidence}\label{sec:soft_evidence}
Virtual evidence \citep{pearl1988probabilistic} is a method for  conditioning a joint distribution 
on an uncertain evidence, obtained for instance  from a noisy source of information.
It is also referred to as soft evidence “nothing else considered”  by \citep[3.9]{darwiche2009modeling} and \citep{chan2005revision}. 

Consider two discrete variables $X$ and $Z$ and their  joint prior  pmf $p(x,z)=p(x)p(z|x)$, where the values of 
$\{z \in z_k\}_{k=1}^{K}$ are mutually exclusive.
Virtual evidence assumes that $\zeta$, i.e. the uncertain observation of $Z$,
can be expressed
by the likelihood ratios: $\lambda_1 : ...: \lambda_K = L(Z=z_1) : ... : L(Z=z_K)$.
Assuming $\zeta$  to be independent from the prior, the update rule  (see \citet[Theorem 2]{chan2005revision} and \citet{munk2022uncertain}) is:
\begin{align}
	p(x | \zeta) = \frac{\sum_{k=1}^{K}  p(x , z_k) \lambda_k}
	{\sum_{j=1z}^{K}  p(z_j) \lambda_j}.
	\label{eq:virtual_ev}
\end{align}

We can make a few observation about the update of Eq.\eqref{eq:virtual_ev}.
First,  $p(x)$ and
$p(x|\zeta)$ have the same zero probabilities.
Indeed, virtual evidence is based on conditioning,   which does not modify the zero probabilities  \cite[Chapter 3.3]{darwiche2009modeling}. 

If there is a unique  $\lambda_j$\textgreater 0 and
the  remaining $\lambda_i$ ($i \neq j$)   are zero, we have a certain observation ($Z=z_j$). In this case,  the conditioning of Eq.\eqref{eq:virtual_ev} is equivalent to Bayes' rule.

The evidence $\zeta$ does not need to be normalized, as what matters are the likelihood ratios.
However in our application $\zeta$ is constituted by the base forecast of the upper variables and thus it is
normalized.

\subsection{Reconciliation by conditioning on the base forecast of the upper time series}\label{sec:conditioning}
We now show how to 
use virtual evidence in order to condition
$p_{BU}(\mathbf{u},\mathbf{b})$ on  the base forecasts $\hat{p}(u_1)$ of the first upper  time series $u_1$. 
Let us denote by $\mathbf{A}_{[1,]}$ the first row of 
$\mathbf{A}$, such that $u_1 = \mathbf{A}_{[1,]} \mathbf{b}$. 
According to Eq.\eqref{eq:virtual_ev},
the reconciled pmf of the bottom time series is:
\begin{equation}
	\tilde{p}_1(\mathbf{b}) =  \dfrac{\sum_{u_1^*} p_{BU}(u_1^*, \mathbf{b}) \hat{p}(u_1^*)}{\sum_{u_1^*} \sum_{\mathbf{b}^*} p_{BU}(u_1^*, \mathbf{b}^*) \hat{p}(u_1^*)},
	\label{eq:Bayesrec1}
\end{equation}
where $p_{BU}(u_1^*, \mathbf{b})$ denotes the marginal of $p_{BU}$ where all other upper time series forecasts are marginalized. The sums in eq.~\eqref{eq:Bayesrec1} are over all possible values $\mathbf{u}_1^*$ and $\mathbf{b}^*$ in the domain of the pmfs $\hat{p}(u_1)$ and $p_{BU}(u_1,\mathbf{b})$. 

The summation $\sum_{u_1^*} p_{BU}(u_1^*, \mathbf{b})$ is sparse, since
$p_{BU}(u_1^*, \mathbf{b})$ is non-zero only if $\mathbf{A}_{[1,]} \mathbf{b} = u_1^*$.
The subscript  in $\tilde{p}_1(\mathbf{b})$ shows that only the base forecast regarding $u_1$  has been considered.\\

Further insights about the update with virtual evidence
can be obtained by analyzing the relative probability of two bottom vectors $\mathbf{b}^*$ and $\mathbf{b}^{**}$, such that
$\mathbf{A}_{[1,]} \mathbf{b}^* = u_1^*$ and
$\mathbf{A}_{[1,]} \mathbf{b}^{**} = u_1^{**}$.
From  Eq.\eqref{eq:virtual_ev} we obtain:
\begin{equation}
	\frac
	{\tilde{p}_1(\mathbf{b}^{**})}{\tilde{p}_1(\mathbf{b}^*)}
	= 
	\frac
	{p_{BU}(u_1^{**}, \mathbf{b})}
	{p_{BU}(u_1^*, \mathbf{b})}
	\cdot 
	\frac
	{\hat{p}(u_1^{**})}
	{\hat{p}(u_1^*)},
	\label{eq:virtual_ev_rel_prob}
\end{equation}
which shows how virtual evidence updates the relative probability
of $\mathbf{b}^{*}$ and $\mathbf{b}^{**}$, merging information from
$p_{BU}(\mathbf{y})$ and $\hat{p}(u_1)$.

The reconciled joint pmf is eventually:\\
\begin{align*}
	\tilde{p}_1(\mathbf{y}) &= \tilde{p}_1(\mathbf{u},\mathbf{b})= \mathds{1}_{\mathbf{A}\mathbf{b}}(\mathbf{u}) \tilde{p}_1(\mathbf{b}) \\
	&= \mathds{1}_{\mathbf{A}\mathbf{b}}(\mathbf{u}) \dfrac{\sum_{u_1^*} p_{BU}(u_1^*, \mathbf{b}) \hat{p}(u_1^*)}{\sum_{u_1^*} \sum_{\mathbf{b}^*} p_{BU}(u_1^*, \mathbf{b}^*) \hat{p}(u_1^*)}.
\end{align*}
Since $\tilde{p}_1(\mathbf{y})$  has been obtained 
by applying virtual evidence on $p_{BU}(\mathbf{y})$, it has the same support of $p_{BU}(\mathbf{y})$, i.e.  $s(\mathbb{N}^m)$.
Thus,  $\tilde{p}_1(\mathbf{y})$ is coherent.

\paragraph{Sequential updates}
The reconciled pmf $\tilde{p}_1(\mathbf{y})$ can be further updated with  new uncertain evidence and the   updates performed via virtual evidence are commutative \citep{chan2005revision,munk2022uncertain}.
Indeed, the method assumes  the conditional independence between the uncertain observations
(the  base forecast of the different upper variables in our application);
this  is a common assumption when merging probabilistic information acquired from different noisy sensors    \cite[Sec. 2]{durrant2016multisensor}.

We thus adopt a sequential approach, performing 
an update of type eq.~\eqref{eq:Bayesrec1}
for the base forecasts of each upper time series.
The first iteration updates $p_{BU}(\mathbf{y})$ with the virtual evidence $\hat{p}(u_1)$ to obtain $\tilde{p}_1(\mathbf{b})$; the second iteration updates $\tilde{p}_1(\mathbf{y})$ with the virtual evidence  $\hat{p}(u_2)$ to obtain $\tilde{p}_2(\mathbf{y})$, and so on. 
Assuming all base forecasts to be available, the final reconciled distribution is $\tilde{p}(\mathbf{y}) := \tilde{p}_{n-m}(\mathbf{y})$. 
If the base forecast of a  certain upper variable is missing, the corresponding update is skipped.

\begin{proposition}
	If the upper time series forecasts are conditionally independent, the sequential updates procedure is equivalent to a full update procedure with 
	\begin{align*}
		\tilde{p}(\mathbf{y}) &= \tilde{p}(\mathbf{u},\mathbf{b}) = \mathds{1}_{\mathbf{A}\mathbf{b}}(\mathbf{u}) \tilde{p}_1(\mathbf{b}) \\
		&= \mathds{1}_{\mathbf{A}\mathbf{b}}(\mathbf{u}) \dfrac{\sum_{\mathbf{u}^*} p_{BU}(\mathbf{u}^*, \mathbf{b}) \hat{p}(\mathbf{u}^*)}{\sum_{\mathbf{u}^*} \sum_{\mathbf{b}^*} p_{BU}(\mathbf{u}^*, \mathbf{b}) \hat{p}(\mathbf{u}^*)}.
	\end{align*}
\end{proposition}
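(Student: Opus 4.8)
The plan is to reduce both the iterated procedure and the simultaneous update on the right-hand side to one and the same explicit product form in the bottom counts, and then to check that the two coincide term by term. The starting observation is that the coherent structure of $p_{BU}$ makes every marginal trivial: since $p_{BU}(\mathbf{u},\mathbf{b}) = \mathds{1}_{\mathbf{A}\mathbf{b}}(\mathbf{u})\,\hat{p}(\mathbf{b})$, marginalizing over all upper components except $u_i$ leaves $p_{BU}(u_i,\mathbf{b}) = \mathds{1}\{u_i = \mathbf{A}_{[i,]}\mathbf{b}\}\,\hat{p}(\mathbf{b})$. Substituting this into the single-variable update of Eq.~\eqref{eq:Bayesrec1} collapses the sum over $u_i^*$ and yields $\tilde{p}_1(\mathbf{b}) \propto \hat{p}(\mathbf{b})\,\hat{p}\big(u_1 = \mathbf{A}_{[1,]}\mathbf{b}\big)$; that is, the first update merely multiplies the bottom pmf by the likelihood factor $\hat{p}\big(u_1 = \mathbf{A}_{[1,]}\mathbf{b}\big)$ and renormalizes.

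First I would set up the induction over the sequence of updates. The crucial point is that after each update the reconciled joint retains the coherent form $\tilde{p}_k(\mathbf{u},\mathbf{b}) = \mathds{1}_{\mathbf{A}\mathbf{b}}(\mathbf{u})\,\tilde{p}_k(\mathbf{b})$, as already established in Section~\ref{sec:conditioning}. Consequently the marginal needed for the $(k{+}1)$-th step again collapses to $\mathds{1}\{u_{k+1} = \mathbf{A}_{[k+1,]}\mathbf{b}\}\,\tilde{p}_k(\mathbf{b})$, and feeding this into Eq.~\eqref{eq:Bayesrec1} shows that the $(k{+}1)$-th update multiplies $\tilde{p}_k(\mathbf{b})$ by $\hat{p}\big(u_{k+1} = \mathbf{A}_{[k+1,]}\mathbf{b}\big)$ and renormalizes. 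Iterating from $k=0$ to $k=n-m-1$ therefore produces the closed form
\[ \tilde{p}(\mathbf{b}) \;\propto\; \hat{p}(\mathbf{b}) \prod_{i=1}^{n-m} \hat{p}\big(u_i = \mathbf{A}_{[i,]}\mathbf{b}\big), \]
the proportionality constant being the sum of the right-hand side over all $\mathbf{b}$.

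Next I would evaluate the simultaneous update. Here the conditional-independence hypothesis enters: it lets me factor the joint upper evidence as $\hat{p}(\mathbf{u}) = \prod_{i=1}^{n-m}\hat{p}(u_i)$. Using $p_{BU}(\mathbf{u}^*,\mathbf{b}) = \mathds{1}\{\mathbf{u}^* = \mathbf{A}\mathbf{b}\}\,\hat{p}(\mathbf{b})$, the sum over $\mathbf{u}^*$ in the full update collapses to the single coherent term $\mathbf{u}^* = \mathbf{A}\mathbf{b}$, giving a numerator $\hat{p}(\mathbf{b})\,\hat{p}\big(\mathbf{u}^* = \mathbf{A}\mathbf{b}\big) = \hat{p}(\mathbf{b})\prod_{i=1}^{n-m}\hat{p}\big(u_i = \mathbf{A}_{[i,]}\mathbf{b}\big)$. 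This is exactly the product form obtained for the sequential procedure, and since both denominators are the sum of this common numerator over all $\mathbf{b}$, they agree as well. Multiplying through by $\mathds{1}_{\mathbf{A}\mathbf{b}}(\mathbf{u})$ to pass from the bottom pmf to the joint pmf then establishes the claimed equality.

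The main obstacle I anticipate is bookkeeping rather than conceptual. I must verify carefully that the coherent indicator structure is preserved at every step, so that each marginal collapses as claimed, and that the normalizing constants produced by the successive renormalizations telescope into the single constant of the simultaneous update; both follow from the support-preservation property of virtual evidence noted earlier. Stating the induction hypothesis precisely, namely that $\tilde{p}_k$ is coherent and proportional to $\hat{p}(\mathbf{b})\prod_{i\le k}\hat{p}\big(u_i = \mathbf{A}_{[i,]}\mathbf{b}\big)$, is what makes the argument clean. Finally, the commutativity of virtual-evidence updates cited in the text guarantees that the ordering of the upper series is irrelevant, which is consistent with the symmetric product form of the result.
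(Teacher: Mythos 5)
Your proof is correct and follows essentially the same route as the paper's: an induction over the sequence of upper-series updates, using the collapse of the sums forced by the coherence indicators together with the factorization $\hat{p}(\mathbf{u}^*)=\prod_i \hat{p}(u_i^*)$ from the conditional-independence hypothesis. Your explicit closed form $\tilde{p}(\mathbf{b})\propto \hat{p}(\mathbf{b})\prod_{i}\hat{p}\bigl(u_i=\mathbf{A}_{[i,]}\mathbf{b}\bigr)$ is just the evaluated version of the sums the paper carries through its induction, so the two arguments coincide in substance.
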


\subsection{Reconciling a  Minimal Hierarchy}
\label{sec:toyExample}

\begin{figure}[htp!]
	\centering
	\begin{tikzpicture}[level/.style={sibling distance=40mm/#1}, , scale=0.9]
		\node [circle,draw] (z){$\: Y\:$ }
		child {node [circle,draw]  {$S_1$}
		}
		child {node [circle,draw] {$S_2$}
		};
	\end{tikzpicture}
	\caption{The minimal hierarchy.}
	\label{fig:toy-hierarchy}
\end{figure}

\begin{table}[ht!]
	\begin{subtable}[h]{\textwidth}
		\centering
		\begin{tabular}{@{}cccc@{}}
			\toprule
			\footnotesize $S_1=0$, $S_2=0$&
			\footnotesize $S_1=0$, $S_2=1$&
			\footnotesize $S_1=1$, $S_2=0$&
			\footnotesize $S_1=1$, $S_2=1$\\
			\midrule
			.25& .25 & .25 & .25 \\
			\bottomrule
		\end{tabular}
		\caption{$\hat{p}(s_1,s_2)$.}\label{tab:jointS1S2}
	\end{subtable}
	
	\vspace*{0.5 cm}
	
	\begin{subtable}[h]{\textwidth}
		\centering
		\begin{tabular}{@{}ccccc@{}}
			\toprule
			&
			\footnotesize $S_1=0$, $S_2=0$&
			\footnotesize $S_1=0$, $S_2=1$&
			\footnotesize $S_1=1$, $S_2=0$& 
			\footnotesize $S_1=1$, $S_2=1$\\
			\midrule
			$Y=0$&$1$&$0$&$0$&$0$\\
			$Y=1$&$0$&$1$&$1$&$0$\\
			$Y=2$&$0$&$0$&$0$&$1$\\
			\bottomrule
		\end{tabular}
		\caption{$\mathds{1}_{s_1+s_2}(y)$.}\label{tab:delta}
	\end{subtable}
	
	\vspace*{0.5 cm}
	
	\begin{subtable}[h]{\textwidth}
		\centering
		\begin{tabular}{@{}ccccc@{}}
			\toprule
			&
			\footnotesize $S_1=0$, $S_2=0$&
			\footnotesize $S_1=0$, $S_2=1$&
			\footnotesize $S_1=1$, $S_2=0$&
			\footnotesize $S_1=1$, $S_2=1$\\
			\midrule
			\rowcolor{backcolour} $Y=0$&$.25$&0&0&0\\
			\rowcolor{backcolour} $Y=1$&0&$.25$&$.25$&0\\
			\rowcolor{backcolour} $Y=2$&0&0&0&$.25$ \\ 
			\bottomrule
		\end{tabular}
		\caption{$p_{BU}(y,s_1,s_2) = \mathds{1}_{s_1+s_2}(y) \cdot \hat{p}(s_1,s_2)$.}\label{tab:jointBU}
	\end{subtable}
	\caption{Probabilistic bottom-up reconciliation.}
\end{table}

Figure \ref{fig:toy-hierarchy}  represents a minimal temporal hierarchy, whose bottom variables are the  two semesters  and whose upper variable is the year.
We assume  $S_1$ and $S_2$ to  take values in $\{0, 1\}$, $Y$ to take values in $\{0,1,2\}$, the data  to arrive up to year $t$ and the base forecast to refer to year $t+1$.
We denote by $S_{(t,i)}$ the random variable corresponding to the value of the $i$-th semester  of year $t$ and by $s_{(k,i)}$ the observation referring to 
the $i$-th semester  of year $k$ ($k$ \textless $t$).
Moreover, $Y_{t+1}$ denotes the random variable corresponding to year $t+1$, while $y_k$ denotes the observation of year $k$. 
The probability mass functions of the base forecasts are thus:
\begin{align*}
	\hat{p}(s_{1}) & := \hat{p}(s_{t+1,1})  =  p(s_{t+1,1} \mid s_{(1,1)},..,s_{(t,2)} ),  \\ \nonumber
	\hat{p}(s_{2}) & := \hat{p}(s_{t+1,2})  = p(s_{t+1,2} \mid s_{(1,1)},..,s_{(t,2)} ),  \\\nonumber
	\hat{p}(y) & := \hat{p}(y_{t+1})  = p(y_{t+1} \mid y_1,..,y_t ),
\end{align*}
where we introduce a simplified notation which drops the time from the subscript. 

We obtain the joint distribution of the bottom variables assuming 
$\hat{p}(s_1,s_2) = \hat{p}(s_1) \hat{p}(s_2)$. 
In this paper, we always use this independence assumption to create
the joint mass function of the bottom variables.
However, this is not a requirement of our method, which
could also reconcile a predictive multivariate   distribution. 
We leave this as a future research work, acknowledging that 
modelling correlations in temporal hierarchies \citep{nystrup2020temporal} is  an important problem.

Eventually, the pmf of probabilistic bottom-up reconciliation is:
\begin{equation*}\label{eq:joint2}
	p_{BU}(y,s_1,s_2) 
	= \mathds{1}_{s_1+s_2}(y)  \hat{p}(s_1,s_2) \,,
\end{equation*}
We provide a numerical example  in Table~\ref{tab:jointBU}, assuming 
$\hat{p}(s_1, s_2)$ to be uniform.

\subsubsection*{Conditioning on $\hat{p}(y)$}

\begin{table}[!h]
	\begin{subtable}[h]{\textwidth}
		\centering
		\begin{tabular}{@{}cc@{}}
			& \\
			\toprule
			$\text{Y}$ & $\hat{p}(\text{y})$\\
			\midrule
			$0$ & .5\\
			$1$ & .2\\
			$2$ & .3 \\ 
			\bottomrule
		\end{tabular}
		\caption{$\hat{p}(y)$.}\label{tab:py_bu}
	\end{subtable}
	
	\vspace*{0.5 cm}
	\begin{subtable}[h]{\textwidth}
		\centering
		\begin{tabular}{@{}ccccc@{}}
			\toprule
			&
			\footnotesize $S_1=0$, $S_2=0$&
			\footnotesize $S_1=0$, $S_2=1$&
			\footnotesize $S_1=1$, $S_2=0$&
			\footnotesize $S_1=1$, $S_2=1$\\
			\midrule
			&
			$ \frac{.25 \cdot .5}{c} = .416$ &
			$ \frac{.25 \cdot .2}{c} = .167$ &
			$ \frac{.25 \cdot .2}{c} = .167$ &
			$\frac{.25 \cdot .3}{c} = .25$ \\
			\bottomrule
		\end{tabular}
		\caption{$\tilde{p}(s_1,s_2)$.
			The normalizing constant is $c = .25 \cdot .5 +
			25 \cdot .2 + .25 \cdot .2 + .25 \cdot .3$.
		}\label{tab:joint}
	\end{subtable}
	
	\vspace*{0.5 cm}
	\begin{subtable}[h]{\textwidth}
		\centering
		\begin{tabular}{@{}ccccc@{}}
			\toprule
			\footnotesize $\tilde{p}(y,s_1,s_2)$&
			\footnotesize $S_1=0$, $S_2=0$&
			\footnotesize $S_1=0$, $S_2=1$&
			\footnotesize $S_1=1$, $S_2=0$&
			\footnotesize $S_1=1$, $S_2=1$\\
			\midrule
			\rowcolor{backcolour} 
			$Y=0$ & $.416$ & 0 & 0 & 0\\
			\rowcolor{backcolour} 
			$Y=1$ & 0 & $.167$ & $.167$ & 0\\
			\rowcolor{backcolour} 
			$Y=2$ & 0 & 0 & 0 & $.25$ \\ 
			\bottomrule
		\end{tabular}
		\caption{$\tilde{p}(y,s_1,s_2) = \mathds{1}_{s_1+s_2}(y) \tilde{p}(s_1,s_2) $}\label{tab:joint-updated}
	\end{subtable}
	\caption{Reconciliation of the minimal hierarchy using  virtual evidence.}\label{tab:recon-prob}
\end{table}

We now update $p_{BU}(y,s_1,s_2)$  by conditioning on $\hat{p}(y)$.
By applying  the  updating of Eq.\eqref{eq:Bayesrec1}, we have:

\begin{align*}\label{eq:softevi}
	\tilde{p}(s_1,s_2) & =
	\frac{\sum_{y\in\{0,1,2\}} 
		\hat{p}(y) \cdot p_{BU}(y,s_1,s_2)}
	{\sum_{y'\in\{0,1,2\}}\sum_{s_1',s_2'\in\{0,1\}} \hat{p}(y') \cdot p_{BU}(y',s_1',s_2')}\, 
\end{align*}

and hence:
\begin{align*}
	\tilde{p}(s_1,s_2, y) &= \mathds{1}_{s_1+s_2}(y) \cdot \tilde{p}(s_1,s_2).	
\end{align*}

In  Table \ref{tab:recon-prob} we show a numerical example.

\subsection{Reconciling Poisson base forecast}
\label{sec:poisson}

We now consider an example with Poisson base forecast.
We denote by  $\text{Poi}(x | \lambda_X)$  the Poisson pmf  with parameter $\lambda_X$, and 
we  assume the base forecasts to be: 

\begin{align*}
	\hat{p}(s_1) &  = \text{Poi}(s_1|\lambda_{1})\,, \\ \nonumber
	\hat{p}(s_2) & = \text{Poi}(s_2|\lambda_{2})\,, \\ \nonumber
	\hat{p}(y) & = \text{Poi}(s_y|\lambda_Y)\,,
\end{align*}

The bottom-up pmf is:
\begin{equation*}
	p_{BU}(s_{1}, s_{2}, y)  = \text{Poi}(s_{1} | \lambda_1) \text{Poi}(s_{2} | \lambda_2)  \mathds{1}_{s_1+s_2}(y).
\end{equation*}

while the reconciled pmf  of the bottom variables  is:
\begin{equation}
	\label{eq:poi-reconc}
	\tilde{p}(s_1, s_2)  =
	\frac{\sum_{y=0}^{+\infty}  
		\overbrace{\text{Poi}(s_{1} | \lambda_1) \text{Poi}(s_{2} | \lambda_2) \mathds{1}_{s_1+s_2}(y) }^{p_{BU}(y,s_1,s_2)}
		\overbrace{\text{Poi}(y | \lambda_Y)}^{\hat{p}(y)}
	} {\sum_{y=0}^{+\infty} \sum_{s_1}\sum_{s_2} \text{Poi}(s_{1} | \lambda_1) 
		\text{Poi}(s_{2} | \lambda_2) \mathds{1}_{s_1+s_2}(y) \text{Poi}(y | \lambda_Y)},
\end{equation}
which is analytically intractable.

\subsection{Sampling the reconciled distribution}
\label{subsec:probProgr}
The  reconciled pmf of Eq.~\eqref{eq:poi-reconc} can be however computed
via sampling.
This is for instance an implementation based on PyMC3  \citep{pymc3, martin2018bayesian}, a package for automatic Bayesian inference:

\begin{lstlisting}[language=Python, mathescape=true]
	def reconcile (lambda1, lambda2, lambdaY):
	import pymc3 as pm
	basic_model = pm.Model()
	with basic_model:
	#base forecast of S1 and S2
	S1  = pm.Poisson  ('S1', mu = lambda1)
	S2  = pm.Poisson  ('S2', mu = lambda2)
	
	#virtual evidence
	Y   = pm.Poisson  ('Y',  mu = lambdaY, observed = S1 + S2)
	#implies updating p(s1,s2) with p(s1,s2) * p(Y = s1+s2)
	
	#sampling the reconciled pmf
	trace = pm.sample()
	return trace
	
\end{lstlisting}
The probabilistic program returns samples  from $\tilde{p}(s_1,s_2)$,
using the Metropolis-Hasting algorithm.
In general the probabilistic program  contains $m$   base forecasts  and ($n-m$)   virtual evidences (indicated by the keyword "observed"), one for each upper variable of the hierarchy.
As an example, we provide in~\ref{sec:421hierarchy} the code which reconciles a 4-2-1 hierarchy. 
Alternative packages for probabilistic programming such as Stan \citep{stan} could  be used in a  similar way. 

\subsection{A numerical example}
We now report the results  assuming $\lambda_1 = 2, \lambda_2 = 4, \lambda_Y =9$.
Given the positive incoherence ($\lambda_Y > \lambda_1 + \lambda_2$), 
reconciliation increases 
the expected value of both $S_1$ and $S_2$: see Table~\ref{tab:reconcToy}. 
and the left plot of Fig.~\ref{fig:reconc-B1}. 
A larger increase is applied to  the variable whose base forecast has larger variance, i.e.,  
$S_2$ (Table~\ref{tab:reconcToy}). Moreover, the variances of $S_1$, $S_2$ and $Y$ decrease after reconciliation, since novel information has been acquired through conditioning. 
These are the same  patterns already reported  for the probabilistic Gaussian reconciliation \citep{corani_reconc}.

\renewcommand\arraystretch{1.2}
\begin{table}[htp!]
	\centering
	\begin{tabular}{lccb||ccb}
		\toprule
		& \multicolumn{3}{c}{\cellcolor{lightyellow}\textbf{mean}} & \multicolumn{3}{c}{\cellcolor{lightyellow} \textbf{var}} \\
		& $p_{BU}$ & $\tilde{p}$ & $\Delta$ & $p_{BU}$ & $\tilde{p}$ &  $\Delta$  \\ \midrule
		$S_1$ &  2.0  & 2.4  & +0.4 & 2.0 & 1.9 & -0.1 \\
		$S_2$ &  4.0  & 4.8  & +0.8 & 4.0 & 3.0 & -1.0 \\ 
		$Y$   &  9.0  & 7.2  & -1.8 & 6.0 & 3.6 & -2.4 \\ 	\bottomrule
	\end{tabular}
	\caption{\label{tab:reconcToy} 
		Reconciliation results
		for the example $\lambda_1$=2, $\lambda_2$=4, $\lambda_Y$=9.}
\end{table}    

\begin{figure}[ht!]
	\centering
	\includegraphics[width=0.48\linewidth]{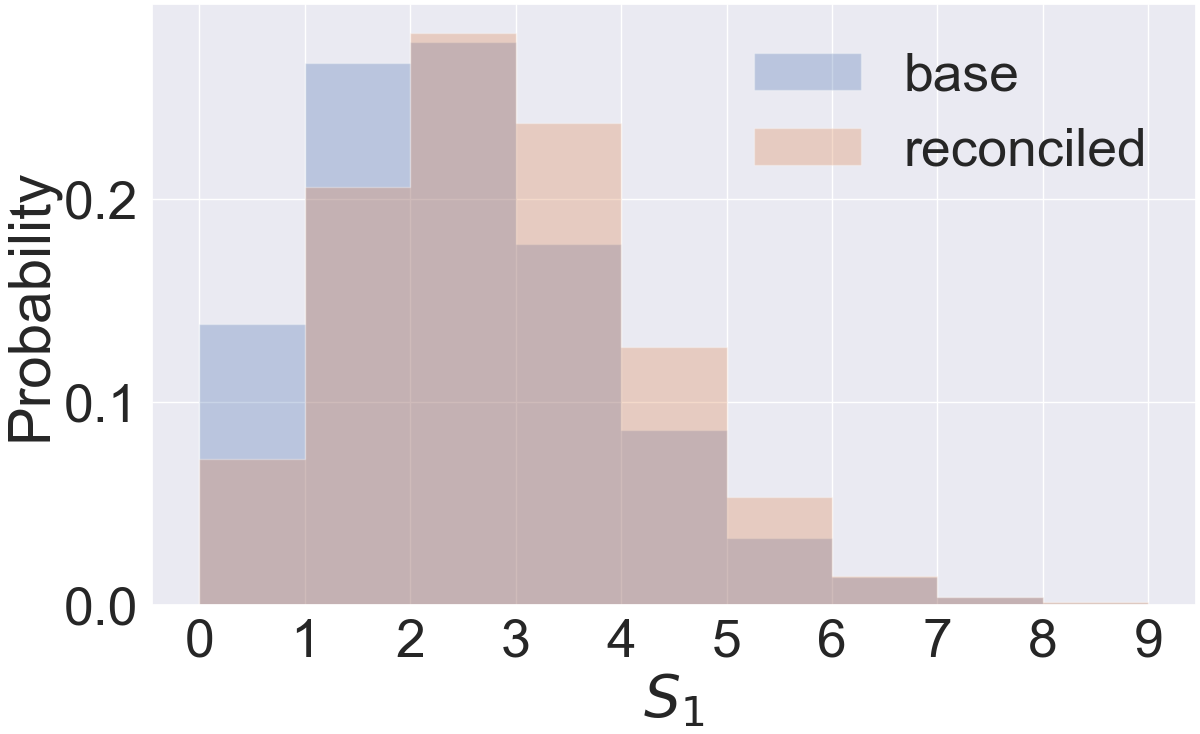}
	\hspace{.2cm}
	\includegraphics[width=0.48\linewidth]{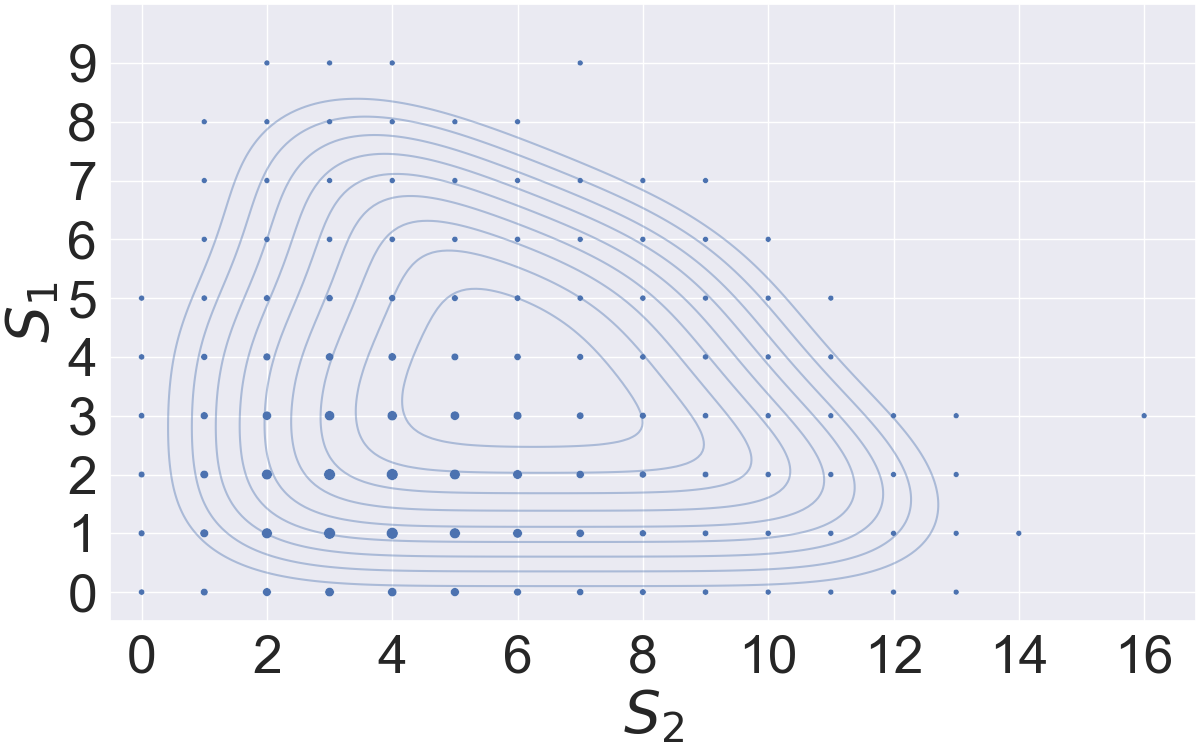}
	\caption{\textit{Left}: The mean  of  $S_1$ increase after reconciliation. \textit{Right}:  $S_1$ and $S_2$ are negatively correlated
		after reconciliation.}
	\label{fig:reconc-B1}
\end{figure}

The right plot of Figure~\ref{fig:reconc-B1} shows that 
$S_1$ and $S_2$ become  negatively correlated after reconciliation.
Indeed, $S_1$ and $S_2$ become dependent once $Y$ is observed, because  $S_1 + S_2 = Y$.
If for instance we  observe $Y$=1, 
the only joint states compatible with the evidence 
are ($S_1=0, S_2=1$) and ($S_1=1, S_2=0$), whence  the negative correlation. 
For the same reason, also virtual evidence induces   negative correlation.
Fig.~\ref{fig:reconc-U}  shows that the reconciled pmf of $Y$ is a compromise between 
its bottom-up pmf and its base forecast:
this  is an analogy with  Bayesian inference, where the posterior distribution  is a compromise between the prior  and the likelihood of the observation.

\begin{figure}[ht!]
	\centering
	\includegraphics[width=0.58\linewidth]{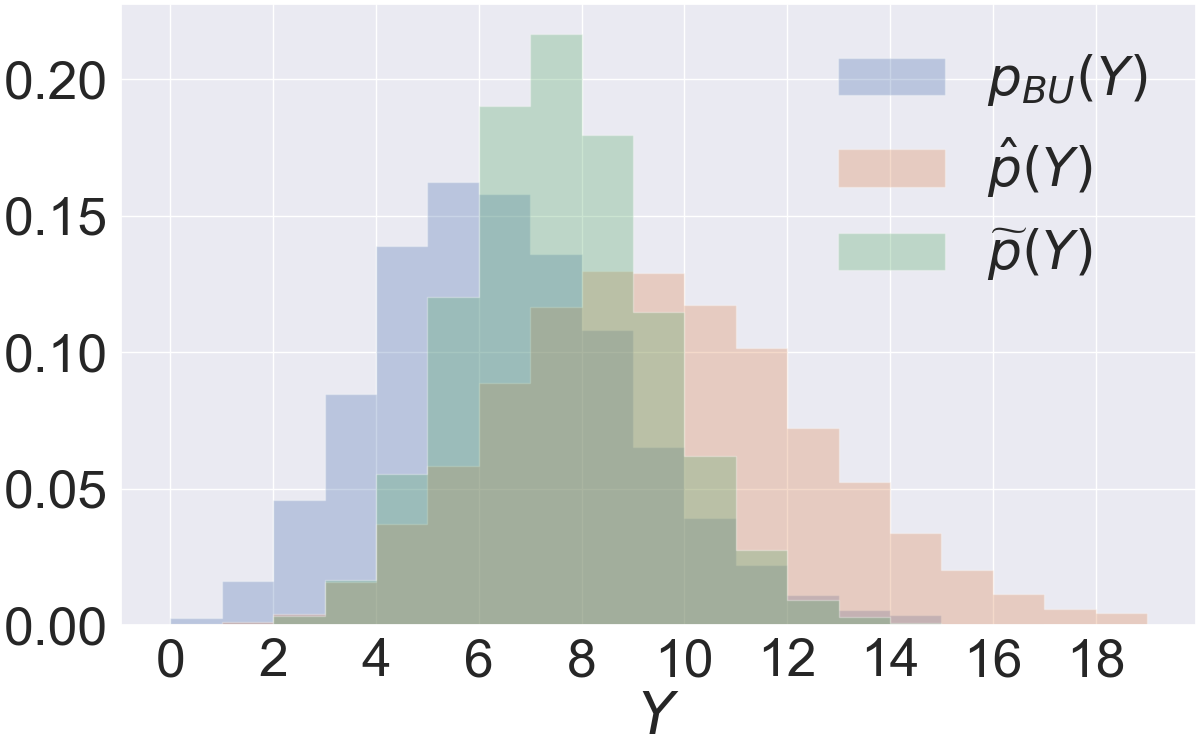}
	\caption{Bottom-up, base forecast  and reconciled pmf  of $Y$. 
	}
	\label{fig:reconc-U}
\end{figure}

\section{Experiments}
\label{sec:experiments}
We select time series 
with maximum value \textless 30 and with average inter-demand interval (ADI) \textless 2, which can be appropriately modelled by  autoregressive models for counts.
Following this criterion we select:
\begin{itemize}
	\item 219 time series   from   \textit{carparts}, available from the R package \textit{expsmooth} \citep{expsmooth} and regarding
	monthly sales of car parts;
	\item 53 time series  from  \textit{syph}, available from the R package \emph{ZIM} \citep{zim_manual} and
	regarding the weekly number of syphilis cases in the United States,  which we aggregate to the monthly scale (this step involve some   approximation);
	\item 135 time series from   \textit{hospital},
	available from the R package \textit{expsmooth} \citep{expsmooth} and
	regarding the	monthly  counts of patients. 
\end{itemize}
In Table~\ref{tab:dsets} we report the percentage of intermittent time series in each data set.

\begin{table}[htp!]
	\centering
	\begin{tabular}{@{}rrrr@{}}
		\toprule
		& selected & \% intermittent & mean length\\ 
		&  &  & \multicolumn{1}{c}{(years)}\\ 
		
		\midrule
		\rowcolor{backcolour}
		carparts &                      219 &                            94\% &   4                         \\
		syph &                      53 &                       47\% &        4                    \\
		\rowcolor{backcolour}
		hospital &                     135 &                             0\% &      7                      \\ \bottomrule
	\end{tabular}
	\caption{Main characteristics of the selected time series. We consider a time series as \textit{intermittent} if its average inter-demand interval (ADI) is \textgreater 1.32 \citep{SyntetosBoylan2005}. 
	}\label{tab:dsets}
\end{table}

\paragraph{Temporal hierarchy and base forecasts}
In every experiment  the bottom forecasts are at the monthly scale.
As in \cite{athanasopoulos2017_temporal},
we compute  the following  temporal aggregates: 2-months, 3-months, 4-months, 6-months, 1 year.

At each level of the hierarchy we  fit a GLM autoregressive model 
with  negative binomial predictive distribution.
We use the  \textit{tscount}  package \citep{tscount} and we select via BIC the order of the autoregression. 

At every level of the hierarchy, the test set  has length of one year.
We thus compute  forecasts up to $h$=12 steps ahead at the monthly level,
up to $h$=6 steps ahead 
at the bi-monthly level, etc.
The forecast of count time series models 
have closed form only one step ahead;  after that,
the  predictive distribution is constituted by samples \citep{tscount}.
Depending on the reconciliation method being adopted, we fit a Gaussian or a negative binomial distribution on the  samples.

\paragraph{Reconciliation}
In the Gaussian case \citep{corani_reconc}, the   reconciled predictive distribution has the same mean and variance of minT  (formulas~\eqref{eq:s_ytilde}
and~\eqref{eq:varianceMinT} respectively)
For  Gaussian reconciliation,
we fit a Gaussian distribution on the samples of the base forecast, for each level and for each   forecasting horizon (e.g., we fit 12 different distributions  at the monthly level).
We then perform  reconciliation using the covariance matrix of 
hierarchy variance and the structural scaling, discussed in Sec \ref{sec:t-hier}.
These methods are  referred to in the following as  \textit{normal} and \textit{structural scaling}.

We moreover implemented a \textit{truncated} approach.
To this end we perform
the \textit{normal} reconciliation,
truncating the distribution of the reconciled bottom forecast. 
We then sum them up via sampling in order to obtain the distribution of the  upper variables.
This a simple way to  obtain positive reconciled forecasts. 

We implement our approach by fitting a  negative binomial distribution on the  base forecasts on the samples of  each level and  each  forecasting horizon and performing reconciliation  via probabilistic programming. The probabilistic program  contains 12 variables (the bottom variables) and 16 soft evidences, corresponding to the upper variables of the hierarchy. 
We refer to this methods as \textit{probCount}.
The reconciliation takes about 2-3  minutes 
on a standard laptop.  This  approach is therefore currently not suitable to  hierarchies containing large number of variables.
Alternative approaches based on importance sampling   
constitute a promising direction for efficient probabilistic reconciliation  
\citep{zambon2022}.

\paragraph{Indicators}
We assess the methods according to their point forecasts, predictive distributions and prediction intervals.
Let us denote by $y_{t+h}$ the actual value of the time series at time $t+h$
and by $\hat{y}_{t+h|t}$ the point forecast computed at time $t$ for time $t+h$. We further denote the predictive distribution by $\hat{f}_{t+h|t}$. Note that $\hat{f}_{t+h|t}(i)$ is a discrete probability mass for $i=1, \ldots, \infty$. 

The mean scaled absolute error (MASE) \citep{hyndman2006another} is defined as:
\begin{align*}
	\text{MAE} & = \frac{1}{h} \sum_{j=1}^{h}|y_{t+j}-\hat{y}_{t+j|t}|\,,\\
	Q & = \frac{1}{T-1}\sum_{t=2}^{T}|y_t - y_{t-1}|\,,\\
	\text{MASE} & = \frac{MAE}{Q}\,.\\
\end{align*}

We use the median of the reconciled distribution  as  point forecast, since 
it is the optimal point forecasts for MAE and MASE \citep{kolassa2016evaluating}.
However,   the median point forecast is generally not coherent, even if the joint distribution  is coherent \citep{kolassa2022}.

Different scores for discrete predictive distributions are discussed in \citep{kolassa2016evaluating}. Here we use the ranked probability score (RPS). Given the predictive probability mass  $\hat{f}_{t+h|t}(i)$, the cumulative predictive probability mass is $\hat{F}_{t+h|t}(k)=\sum_{i=0}^{k}\hat{f}_{t+h|t}(i)$. For a realization $y_{t+h}$, then we have:
\begin{equation}
	\text{RPS}(\hat{f}_{t+h|t}, y_{t+h}) = \sum_{k=0}^\infty (\hat{F}_{t+h|t}(k) - \mathbb{1}_{y_{t+h} \leq k})^2,
\end{equation}
where $\mathbb{1}_{y \leq k}$ is the indicator function for $y\leq k$. 

We compute the RPS of continuous distributions by applying the continuity correction, i.e. computing $p(X=x)$ as $\int_{x-0.5}^{x+0.5}g(t)dt$, where $g(t)$ denotes the continuous density.

We score the prediction intervals via
the mean interval score (MIS)
\citep{gneiting2011quantiles}.
Let us denote by (1-$\alpha$) 
the desired coverage of the interval,
by $l$ and $u$ the lower and upper bounds  
of the interval. 
We have: 
\begin{equation*}
	\text{MIS}(l,u,y) =
	(u-l) + \frac{2}{\alpha} (l-y) \mathbb{1}(y < l) + \frac{2}{\alpha} (y-u) \mathbb{1}(y > u)\,.
\end{equation*}
We adopt a 90\% coverage level ($\alpha$=0.1).
The  MIS rewards narrow prediction intervals; however, it also penalizes intervals which do not contain the actual value; the  penalty  depends on $\alpha$.
In the definition of RPS and MIS, it is understood that $y_{t+j}$, $\hat{f}_{t+h|t}(i)$, $\hat{F}_{t+h|t}(k)$, $l$ and $u$ are specific for a certain level of the hierarchy and for a certain forecasting horizon $j$, $1 \leq j \leq h$.

We also report the Energy score (ES), which is a proper scoring rule for distributions defined on the entire hierarchy
\citep{panagiotelis2022}.
Given a realization $\mathbf{y}$ and a joint probability $P$ on the entire hierarchy, the ES is: 
\begin{equation}
	ES(P,\mathbf{y}) = E_P 
	|| \mathbf{y} - \mathbf{s}||^{\alpha} 
	- \frac{1}{2} E_P
	|| \mathbf{s} - \mathbf{s}^*||^{\alpha} \,,
\end{equation}
where $\mathbf{s}$ and $\mathbf{s}^*$ are independent samples drawn from $P$.
We compute the energy score using the \textit{scoringRules} package\footnote{We use R packages  in Python via the \textit{rpy2} utility,
	\url{https://rpy2.github.io}.} \citep{pkg:scoringRules} with $\alpha$=2.

\paragraph{Skill score}
We compute the
\textit{skill score} 
on a certain indicator 
as the percentage improvement  with respect to 
the \textit{normal} method, taken as
a baseline. 
Skill scores  are scale-independent and 
can be thus  averaged across multiple time series.   For instance the skill score of \textit{probCount} on   MASE is:
\[
\text{Skill (MASE, probCount)} = 
\frac{\text{MASE(normal) - MASE(probCount)}}
{(\text{MASE(normal) + MASE(probCount)})/2}\,.
\]

Thus a positive skill score  implies an improvement  compared to \textit{normal}.
The denominator makes the skill score symmetric
and bounded between $-2$ and $2$, allowing a fair comparison between the competitors and  the baseline.

For each level of the hierarchy, we compute the skill score  for each forecasting horizon $j$ ($1 \leq j \leq h$); then we average over the different  $j$.
Only for the \textit{probCount} we compute also the skill score with respect to the base forecast, constituted by a negative binomial distribution  fitted on the samples returned by the GLM models.

\subsection{Experiments on \textit{carparts} and \textit{syph}}

\renewcommand\arraystretch{1.2}
\begin{table}[htp!]
	\begin{subtable}[h]{\linewidth}
		\begin{tabular}{rrrrrb}
			\toprule
			\textbf{Skill score on carparts} &  & \multicolumn{3}{c}{vs \textit{normal}} &  vs \textit{base} \\
			& & \textit{struc scal} & \textit{truncated} & \textit{probCount} & \textit{probCount}  \\
			\midrule
			\rowcolor{lightyellow}
			\textbf{ENERGY SCORE} & & -0.06 &     -0.19 &   0.27 &   0.34 \\ \midrule
			\textbf{MASE} &          Monthly &      -0.01 &     -0.02 &   0.18 &   0.00 \\
			&        2-Monthly &      -0.02 &     -0.08 &   0.21 &   0.00 \\
			&        Quarterly &      -0.03 &     -0.11 &   0.21 &   0.00 \\
			&        4-Monthly &      -0.03 &     -0.14 &   0.30 &   0.18 \\
			&         Biannual &      -0.04 &     -0.20 &   0.30 &   0.09 \\
			&           Annual &      -0.09 &     -0.30 &   0.89 &   0.80 \\
			\rowcolor{lightyellow} & \textit{average} &      -0.04 &     -0.14 &   0.35 &   0.18 \\
			\midrule
			\textbf{MIS} &          Monthly &       0.00 &      0.27 &   0.36 &   0.38 \\
			&        2-Monthly &       0.00 &     -0.07 &   0.15 &   0.53 \\
			&        Quarterly &      -0.01 &     -0.21 &   0.15 &   0.58 \\
			&        4-Monthly &      -0.10 &     -0.29 &   0.17 &   0.67 \\
			&         Biannual &      -0.11 &     -0.27 &   0.20 &   0.85 \\
			&           Annual &      -0.24 &     -0.56 &   0.25 &   1.22 \\
			\rowcolor{lightyellow} & \textit{average} &      -0.08 &     -0.19 &   0.22 &   0.71 \\
			\midrule
			\textbf{RPS} &          Monthly &      -0.02 &     -0.06 &   0.43 &   0.20 \\
			&        2-Monthly &      -0.03 &     -0.12 &   0.37 &   0.29 \\
			&        Quarterly &      -0.04 &     -0.15 &   0.40 &   0.25 \\
			&        4-Monthly &      -0.05 &     -0.20 &   0.42 &   0.33 \\
			&         Biannual &      -0.08 &     -0.27 &   0.32 &   0.43 \\
			&           Annual &      -0.12 &     -0.36 &   1.14 &   0.96 \\
			\rowcolor{lightyellow} & \textit{average} &      -0.06 &     -0.19 &   0.51 &   0.41 \\ 
			\bottomrule
		\end{tabular}
		\caption{Results on time series extracted from \textit{carparts}, detailed by each level of the hierarchy.}
		\label{tab:carparts}
	\end{subtable}
	\newline
	\vspace*{1 cm}
	\newline
	\begin{subtable}[h]{\linewidth}
		\centering
		\begin{tabular}{llrrrb}
			\toprule
			\textbf{Skill score on syph} &  & \multicolumn{3}{c}{vs \textit{normal}} &  vs \textit{base} \\
			& & \textit{struc scal} & \textit{truncated} & \textit{probCount} & \textit{probCount}  \\
			\midrule
			ENERGY SCORE  & &    -0.07& -0.21 &  0.27  &   0.28 \\
			MASE  & &    -0.06& -0.17 &  0.35  &   0.06 \\
			MIS   & &    -0.13& -0.16 &  0.20  &   0.23 \\
			RPS  & &    -0.06 & -0.18 &  0.42 &   0.41 \\
			\bottomrule
		\end{tabular}
		\caption{Results on time series extracted from \textit{syph},  averaged over the entire hierarchy.}
		\label{tab:syph}
	\end{subtable}
	\caption{Skill score on \textit{carparts} and \textit{syph}. The first columns report skill score with respect to \textit{normal}. The last column reports the skill score of \textit{probCount} with respect to the \textit{base} forecasts.}
\end{table}
\begin{figure}[htp!]
	\centering
	\begin{subfigure}[h]{\figwidth}
		\includegraphics[width=\textwidth]{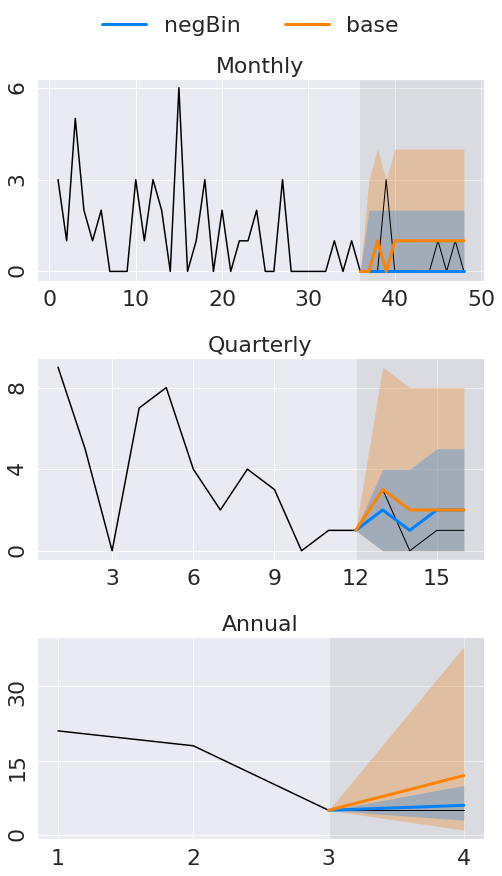}
		\caption{Comparison of forecasts (\textit{base} vs \textit{probCount}) on a time series extracted from \textbf{carparts}.}
		\label{fig:carparts-base}
	\end{subfigure}
	\begin{subfigure}[h]{\figwidth}
		\includegraphics[width=\textwidth]{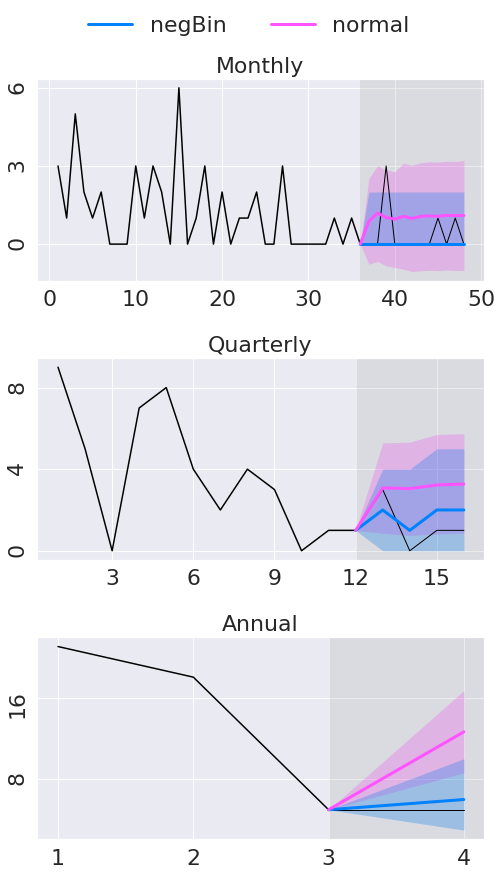}
		\caption{Comparison of forecasts (\textit{normal} vs \textit{probCount}) on a time series extracted from \textbf{carparts}.}
		\label{fig:carparts-normal}
	\end{subfigure}
	
	\vspace*{0.5 cm}
	
	\begin{subfigure}[h]{\figwidth}
		\includegraphics[width=\textwidth]{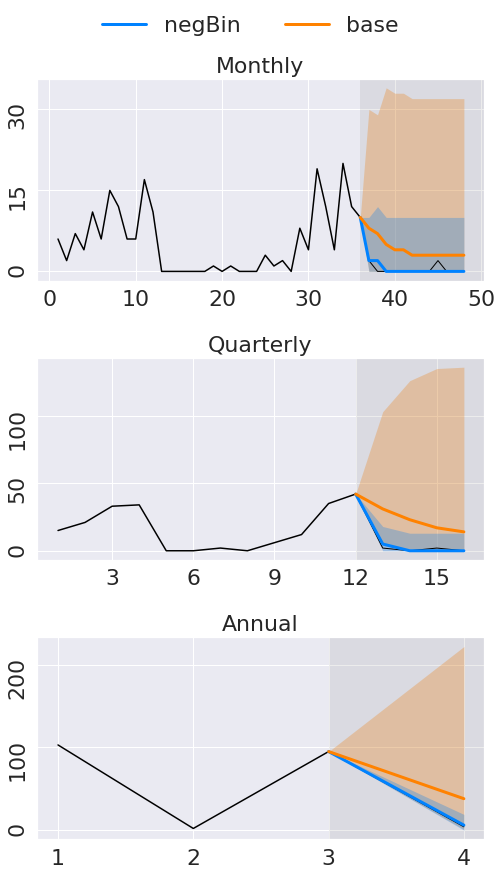}
		\caption{Comparison of forecasts (\textit{base} vs \textit{probCount}) on a time series extracted from \textbf{syph}.}
		\label{fig:syph-base}
	\end{subfigure}
	\begin{subfigure}[h]{\figwidth}
		\includegraphics[width=\textwidth]{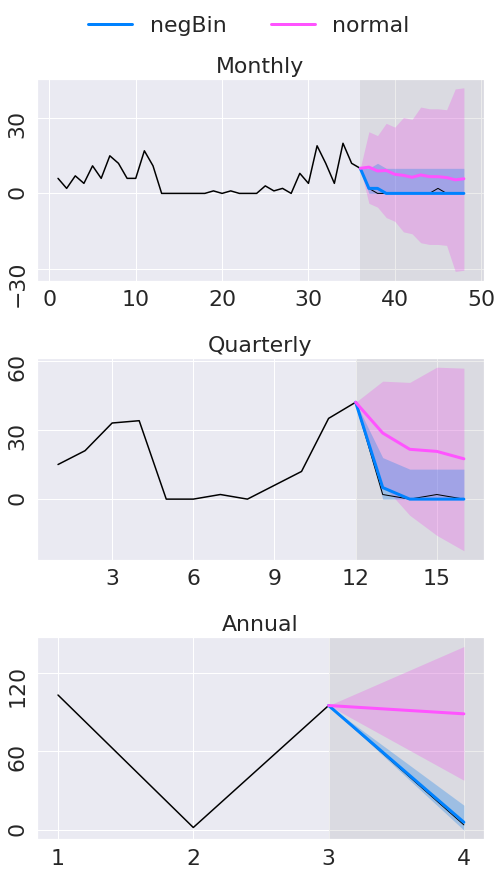}
		\caption{Comparison of forecasts (\textit{normal} vs \textit{probCount}) on a time series extracted from \textbf{syph}.}
		\label{fig:syph-normal}
	\end{subfigure}
	\caption{Reconciliation of two time series from the \textit{carparts} and the \textit{syph} data set. For simplicity we only show three levels of the hierarchy. The black line shows the actual values.}
	\label{fig:carparts-syph}
\end{figure}

The  \textit{carparts} data set has a high percentage of intermittent time series (94\%) and 
the base forecasts are generally asymmetric.  In these conditions 
\textit{probCount} yields a large improvement 
over \textit{normal}
on every score (Table~\ref{tab:carparts}).
Averaging over the entire hierarchy, the  improvement of  \textit{probCount} over \textit{normal} ranges between 22\% and 51\% 
depending on the indicator; the energy score improves by about 27\% 
The \textit{structural scaling} and the  \textit{truncated} method
perform worse than  the \textit{normal} method.
Hence, the \textit{truncated} method
does not   represent a satisfactory solution for modelling distributions over counts,
even if it yields positive forecasts.

The forecast reconciled by \textit{probCount} 
yield a large improvement also compared
to the  base forecast (last column of Table~\ref{tab:carparts}). The largest improvements  with respect to the base forecasts are in the highest level of the hierarchy, as already observed for  temporal hierarchies \citep{athanasopoulos2017_temporal}.

Also in the \textit{syph} data set, \textit{probCount} largely outperforms  \textit{normal} on every indicator (Table~\ref{tab:syph}); the improvement varies between 27\% and 42\%. 
Large improvements are found also with respect to the base forecasts.
The performance of both
\textit{truncated} 
and \textit{structural scaling} 
is  slightly worse than 
\textit{normal} also in this case.
The result for \textit{syph}, detailed for each level of the hierarchy, are given in the appendix.

In Figure~\ref{fig:carparts-syph} 
we provide two examples of reconciliation, taken from \textit{carparts} and \textit{syph} respectively. 
In both examples, the distribution of the base forecasts is asymmetric at every level  (Figures~\ref{fig:carparts-base},~\ref{fig:carparts-normal}), with the median much   lower than the mid-point of the prediction interval. 
Based on this information, \textit{probCount}  revises downwards
the point forecasts compare to the base forecasts.
At the monthly level its point forecasts (i.e., the medians) are  0.
This is both the lower bound of the prediction interval and the median:
the reconciled distribution is strongly asymmetric
as it can happen when the counts are low.
The adjustment applied by
\textit{probCount} is effective, and its
point forecasts   are more accurate than the base forecasts at every level of the hierarchy.

The \textit{normal} method does not capture the asymmetry of the base forecasts. Its reconciled point forecasts are  less accurate than those of \textit{probCount}, and its prediction intervals often include  negative values (Figures~\ref{fig:carparts-normal},~\ref{fig:syph-normal}).  

Both \textit{probCount} and \textit{normal} have shorter prediction intervals compared to the base forecast. 
This makes the predictive distribution and the prediction interval more informative, increasing the MIS and the RPS score.
Yet, the prediction intervals of 
both \textit{probCount} and \textit{normal} are sometimes too short.
In future, this could be addressed by modelling the correlation between the base forecasts  using more sophisticated multivariate distributions over counts
\citep{Panagiotelis2012,Inouye2017}.

\subsection{Experiments on  \textit{hospital}}

\begin{figure}[ht!]
	\centering
	\begin{subfigure}[h]{\figwidth}
		\includegraphics[width=\textwidth]{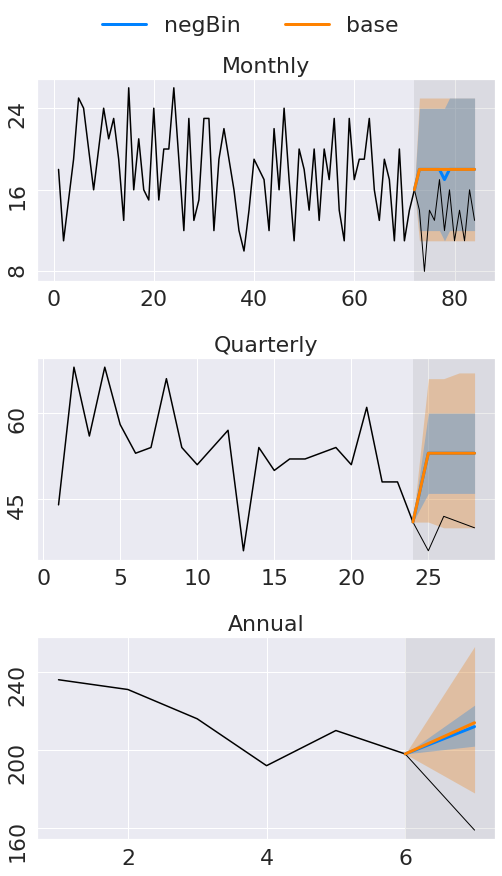}
		\caption{Comparison of forecasts (\textit{base} vs \textit{probCount}) on a time series  from \textbf{hospital}.}
		\label{fig:hospital-base}
	\end{subfigure}
	\begin{subfigure}[h]{\figwidth}
		\includegraphics[width=\textwidth]{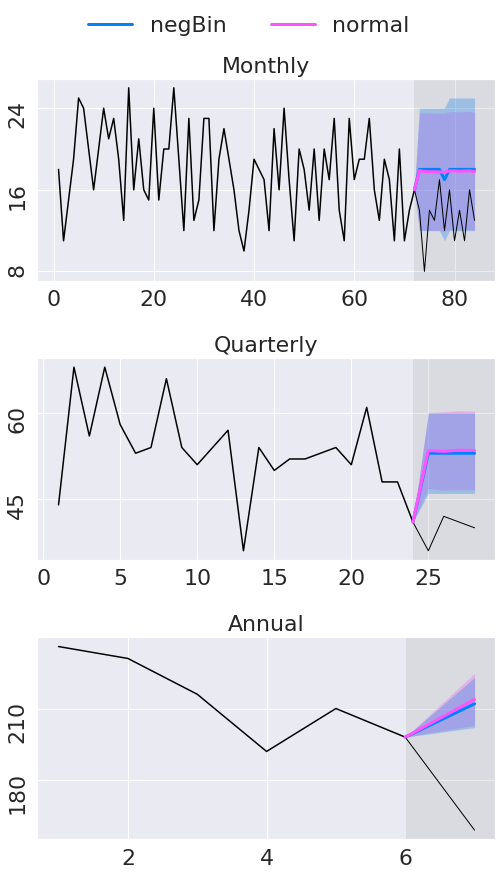}
		\caption{Comparison of forecasts (\textit{normal} vs \textit{probCount}) on a time series  from \textbf{hospital}.}
		\label{fig:hospital-normal}
	\end{subfigure}
	\caption{Examples of reconciliation on a time series from the \textit{hospital} data set.}
	\label{fig:hospital}
\end{figure}

\begin{table}[ht!]
	\centering
	\begin{tabular}{llrrrb}
		\toprule
		\textbf{Skill score on hospital} &  & \multicolumn{3}{c}{Skill score vs \textit{normal}} &  vs base \\
		& & \textit{struc scal} & \textit{truncated} & \textit{probCount} & \textit{probCount}  \\
		\midrule
		ENERGY SCORE  & &    -0.02& 0.00 &  0.00  &   0.03 \\
		MASE  &               &        0.00 &     0.00 &       -0.02 &           0.01 \\
		MIS &               &         -0.02 &      0.00 &        0.00 &           0.03 \\
		RPS &               &        -0.01 &     0.00 &       0.00 &           0.04 \\	 \bottomrule
	\end{tabular}
	\caption{Skill score on time series  from \textit{hospital}, averaged over the entire hierarchy.}
	\label{tab:hospital}
\end{table}

All the  time series extracted from the hospital data set are smooth; the values are high enough to yield symmetric prediction intervals. The samples of the base forecast are well fit  both by the negative binomial and by the Gaussian distribution, among which there are little differences.
Thus the reconciliation methods
become  practically equivalent  (Figure~\ref{fig:hospital-normal}), yielding an almost identical performance (Table~\ref{tab:hospital}). 
The utility of  temporal reconciliation is  confirmed by the positive skill scores  compared to the base forecasts.

\section{Conclusions}
We have shown that virtual evidence, a method originally developed for conditioning probabilistic graphical models on uncertain evidence, can be used to perform probabilistic reconciliation in a principled fashion.
Our method  can reconcile real-valued and  count time series; we focus however on
the latter case, which is especially important since there are currently no methods for reconciling count time series.

The most important result of this paper is that our approach consistently provides a major improvement, 
compared to Gaussian probabilistic reconciliation, in the  reconciliation of  intermittent time series, which  are notoriously hard to forecast. 
Future research directions include  modelling the correlation between the base forecasts and developing a faster  sampling approach in order to reconcile large hierarchies.

\newpage
\appendix
\section{Reconciliation of  a 4-2-1 hierarchy}
\label{sec:421hierarchy}
The 4-2-1 hierarchy of Fig.\ref{fig:hierExample} is reconciled by the code below:
\begin{lstlisting}[language=Python, mathescape=true]
	def reconcile (lambda1, lambda2, lambdaY):
	import pymc3 as pm
	basic_model = pm.Model()
	with basic_model:
	#base forecast of the bottom variables
	Q1 = pm.Poisson ('Q1', mu = lambda1)
	Q2 = pm.Poisson ('Q2', mu = lambda2)
	Q3 = pm.Poisson ('Q3', mu = lambda3)
	Q4 = pm.Poisson ('Q4', mu = lambda4)
	
	#Virtual evidence of the upper variables
	S1 = pm.Poisson ('S1', mu = lambda_S1, observed = S1 + S2)
	S2 = pm.Poisson ('S2', mu = lambda_S2, observed = S3 + S4)
	Y  = pm.Poisson ('Y',  mu = lambda_Y, observed = S1+S2+S3+S4)
	
	#sampling the posterior, i.e., the reconciled distribution
	#for each (s1, s2) computes p(s1) p(s2) p(Y=s1+s2)
	#and eventually normalizes
	trace = pm.sample()
	return trace
\end{lstlisting}

\newpage
\section{Equivalence of definitions~\ref{def:coherence_pan} and \ref{def:coherence} in the continuous case}
\label{sec:equivalentDefs}

\begin{proposition}
	Definitions~\ref{def:coherence_pan} and \ref{def:coherence} are  equivalent in the continuous case.
\end{proposition}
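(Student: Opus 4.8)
The plan is to show that in the continuous case the two definitions do not merely imply one another but literally coincide, since they are built on the same ground set, the same $\sigma$-algebra, and the same defining equation. The only apparent discrepancy is that Definition~\ref{def:coherence_pan} places the reconciled measure on the subspace $\mathfrak{s}$ spanned by the columns of $\mathbf{S}$, whereas Definition~\ref{def:coherence} (with $\mathcal{X}^m=\mathbb{R}^m$) places it on the image $s(\mathbb{R}^m)$. I would therefore first establish the set identity $s(\mathbb{R}^m)=\mathfrak{s}$.

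This identity is immediate from linearity. Since $s(\mathbf{b})=\mathbf{S}\mathbf{b}$, the image $s(\mathbb{R}^m)=\{\mathbf{S}\mathbf{b}:\mathbf{b}\in\mathbb{R}^m\}$ is exactly the column space of $\mathbf{S}$, which is by definition the subspace $\mathfrak{s}$. Consequently the two ground sets coincide, and so do their Borel $\sigma$-algebras, $\mathcal{F}_{s(\mathbb{R}^m)}=\mathcal{F}_{\mathfrak{s}}$. With the supports identified, I would then check that the defining conditions match: Equation~\eqref{eq:coherence} reads $\check{\nu}(s(\mathcal{B}))=\nu(\mathcal{B})$ for all $\mathcal{B}\in\mathcal{F}_{\mathbb{R}^m}$, which is word-for-word Equation~\eqref{eq:panag_coherence} once $\check{\nu}_{\mathfrak{s}}$ is renamed $\check{\nu}$. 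Hence a triple satisfies one condition if and only if it satisfies the other, and the equivalence follows.

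The only genuine technical point, which I would address explicitly, is that $s(\mathcal{B})$ must be a legitimate event for either equation to make sense. This holds because $\mathbf{S}$ has full column rank $m$: it contains the $m\times m$ identity block in its bottom rows, so $s$ is an injective linear map, hence a linear isomorphism of $\mathbb{R}^m$ onto $\mathfrak{s}$ and a homeomorphism onto its image in the subspace topology. A homeomorphism between finite-dimensional Euclidean (sub)spaces is a Borel isomorphism, so $s$ carries $\mathcal{F}_{\mathbb{R}^m}$ bijectively onto $\mathcal{F}_{\mathfrak{s}}$; in particular $s(\mathcal{B})$ is Borel in $\mathfrak{s}$ for every Borel $\mathcal{B}\subseteq\mathbb{R}^m$, and the pushforward relation is well-defined in both directions.

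I expect this measurability verification, rather than the set identity, to be the part demanding the most care in the write-up, since the equivalence of the two definitions is otherwise a direct consequence of the elementary observation that the image of a linear map equals the span of the columns of its matrix.
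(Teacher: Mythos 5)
Your proof is correct and follows essentially the same route as the paper's: identify $s(\mathbb{R}^m)$ with the column space $\mathfrak{s}$, conclude that the two $\sigma$-algebras coincide, and observe that the two defining equations are then literally the same condition on the same objects. If anything, your explicit verification that $s(\mathcal{B})$ is Borel --- via the full column rank of $\mathbf{S}$, injectivity of $s$, and the resulting Borel isomorphism onto $\mathfrak{s}$ --- is more careful than the paper's argument, which only shows that $s(\mathcal{B})$ is a subset of $\mathfrak{s}$ rather than a measurable one.
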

\begin{proof}
	In the continuous case $\mathcal{X}=\mathbb{R}$, thus for any $\mathcal{B} \in \mathcal{F}_{\mathcal{X}^m}$, $s(\mathcal{B}) = \{ s(\mathbf{b}) : \mathbf{b} \in \mathcal{B} \}$. We have that $s(\mathcal{B}) \in \mathcal{F}_{\mathfrak{s}}$ because $\forall \mathbf{b} \in \mathcal{B}$, $s(\mathbf{b}) \in \mathfrak{s}$, by definition, and $\mathcal{F}_{\mathfrak{s}} \subseteq \mathcal{F}_{s(\mathcal{X}^m)}$.
	
	On the other hand, given a element $\mathcal{A} \in \mathcal{F}_\mathfrak{s}$, we can always find a set $\tilde{\mathcal{B}} \in \mathcal{F}_{\mathcal{X}^m}$ such that $s(\tilde{\mathcal{B}}) = \mathcal{A}$. In fact, for any $\mathbf{a} \in \mathcal{A}$ we can write $\mathbf{a} = [\mathbf{a}_{upp},\mathbf{a}_{bot}]$ and $\mathbf{a}_{upp} = S\mathbf{a}_{bot}$ by definition. Thus if we take $\tilde{\mathbf{b}}= \mathbf{a}_{bot}$ we have $s(\tilde{\mathbf{b}}) = \mathbf{a}$. So we have that $\mathcal{A} \in \mathcal{F}_{s(\mathcal{X}^m)}$. 
	
	Since we have showed that the two $\sigma$-algebras $\mathcal{F}_{\mathfrak{s}}$ and $\mathcal{F}_{s(\mathcal{X}^m)}$ are equivalent and the measures $\check{\nu}_\mathfrak{s}$ and $\check{\nu}$ take the same values on all sets of $\mathcal{F}_{\mathcal{X}^m}$ we have that def.~\ref{def:coherence_pan} and def.~\ref{def:coherence} are equivalent in the continuous case. 
\end{proof}

\newpage
\section{Additional fine-grained results on syph and hospital}
\centering
\begin{tabular}{rrrrrb}
	\toprule
	\textbf{Skill score on syph} &  & \multicolumn{3}{c}{vs \textit{normal}} &  vs base \\
	& & \textit{struc scal} & \textit{truncated} & \textit{probCount} & \textit{probCount}  \\
	\midrule
	\textbf{MASE} &      Monthly &    -0.07 & -0.13 &  0.43 &   0.03 \\
	&            2-Monthly &    -0.08 & -0.16 &  0.40 &   0.09 \\
	&            Quarterly &    -0.08 & -0.17 &  0.37 &   0.07 \\
	&            4-Monthly &    -0.07 & -0.18 &  0.35 &   0.07 \\
	&             Biannual &    -0.07 & -0.17 &  0.36 &   0.10 \\
	&               Annual &    -0.02 & -0.19 &  0.20 &   0.01 \\ \rowcolor{lightyellow}
	&     \textit{average} &    -0.06 & -0.17 &  0.35 &   0.06 \\
	\midrule
	\textbf{MIS} &              Monthly &    -0.07 &  0.21 &  0.38 &   0.41 \\
	&            2-Monthly &    -0.16 & -0.09 &  0.19 &   0.33 \\
	&            Quarterly &    -0.16 & -0.18 &  0.13 &   0.29 \\
	&            4-Monthly &    -0.17 & -0.27 &  0.17 &   0.25 \\
	&             Biannual &    -0.11 & -0.25 &  0.21 &   0.14 \\
	&               Annual &    -0.10 & -0.38 &  0.14 &  -0.03 \\ \rowcolor{lightyellow}
	&     \textit{average} &    -0.13 & -0.16 &  0.20 &   0.23 \\
	\midrule
	\textbf{RPS} &              Monthly &    -0.05 & -0.20 &  0.62 &   0.55 \\
	&            2-Monthly &    -0.07 & -0.18 &  0.49 &   0.49 \\
	&            Quarterly &    -0.05 & -0.15 &  0.41 &   0.44 \\
	&            4-Monthly &    -0.05 & -0.16 &  0.40 &   0.42 \\
	&             Biannual &    -0.08 & -0.20 &  0.39 &   0.37 \\
	&               Annual &    -0.04 & -0.17 &  0.21 &   0.22 \\ \rowcolor{lightyellow}
	&     \textit{average} &    -0.06 & -0.18 &  0.42 &   0.41 \\ 
	\bottomrule
\end{tabular}

\begin{tabular}{llrrrb}
	\toprule
	\textbf{Skill score on  hospital} &  & \multicolumn{3}{c}{\textit{normal}} &  vs base \\
	& & \textit{struc scal} & \textit{truncated} & \textit{probCount} & \textit{probCount}  \\
	\midrule
	\textbf{MASE}          & Monthly              &         0.00 &      0.00 &        0.00 &           0.00 \\
	& 2-Monthly            &         0.00 &      0.00 &        0.00 &           0.00 \\
	& Quarterly            &         0.00 &      0.00 &       -0.02 &           0.00 \\
	& 4-Monthly            &         0.00 &      0.00 &       -0.01 &           0.00 \\
	& Biannual             &         0.00 &      0.00 &        0.00 &           0.00 \\
	& Annual               &         0.00 &      0.00 &       -0.07 &           0.04 \\
	\rowcolor{lightyellow} & average              &         0.00 &      0.00 &       -0.02 &           0.01 \\
	\midrule
	\textbf{MIS}           & Monthly              &         0.01 &      0.00 &        0.01 &           0.12 \\
	& 2-Monthly            &         0.00 &      0.00 &        0.00 &           0.15 \\
	& Quarterly            &         0.00 &      0.00 &        0.00 &           0.17 \\
	& 4-Monthly            &        -0.02 &      0.00 &        0.01 &           0.16 \\
	& Biannual             &        -0.04 &      0.00 &        0.00 &       -0.05 \\
	& Annual               &        -0.07 &      0.00 &        0.00 &       -0.38 \\
	\rowcolor{lightyellow} & average              &        -0.02 &      0.00 &        0.00 &           0.03 \\
	\midrule
	\textbf{RPS}           & Monthly              &         0.00 &      0.00 &        0.02 &           0.05 \\
	& 2-Monthly            &         0.00 &      0.00 &        0.00 &           0.09 \\
	& Quarterly            &         0.00 &      0.00 &       -0.01 &           0.09 \\
	& 4-Monthly            &        -0.01 &      0.00 &        0.00 &           0.05 \\
	& Biannual             &        -0.02 &      0.00 &        0.05 &           0.01 \\
	& Annual               &        -0.03 &      0.00 &       -0.08 &          -0.08 \\
	\rowcolor{lightyellow} & average              &        -0.01 &      0.00 &       0.00 &           0.04 \\ \bottomrule
\end{tabular}

\newpage
\subsection*{Proof of Proposition 1}
\begin{proof}
	We prove this by induction over the number of upper time series $n-m$.
	
	Note that if we have one upper time series the two procedures are the same. 
	
	Assume that the two procedures are the same for $n-m-1$ upper time series, i.e. the sequential update after $n-m-1$ upper time series forecasts reconciliations is $\tilde{p}_{n-m-1}(\mathbf{y}) = \mathds{1}_{\mathbf{A}\mathbf{b}}(\mathbf{u}) \dfrac{\sum_{\mathbf{u}^*_{-1}} p_{BU}(\mathbf{u}^*_{-1}, \mathbf{b}) \hat{p}(\mathbf{u}^*_{-1})}{\sum_{\mathbf{u}^*_{-1}} \sum_{\mathbf{b}^*} p_{BU}(\mathbf{u}^*_{-1}, \mathbf{b}) \hat{p}(\mathbf{u}^*_{-1})}$, where we denote by $\mathbf{u}^*_{-1}$ a generic value  for the pmf of the first $n-m-1$ upper forecasts.
	
	The next sequential update is 
	$$\tilde{p}_{n-m}(\mathbf{b}) = \dfrac{\sum_{{u}^*_{n-m}} \tilde{p}_{n-m-1}({u}^*_{n-m}, \mathbf{b}) \hat{p}({u}^*_{n-m})}{\sum_{{u}^*_{n-m}} \sum_{\mathbf{b}^*} \mathds{1}_{\mathbf{A}[n-m,]\mathbf{b}}({u}^*_{n-m})\tilde{p}_{n-m-1}( \mathbf{b}) \hat{p}({u}^*_{n-m})}.$$ 
	We denote the denominator as $Z_{n-m}$, a normalizing constant, and we study just the numerator $\tilde{p}_{n-m}(\mathbf{b})Z_{n-m}$.
	\begin{align}
		\nonumber
		\tilde{p}_{n-m}(\mathbf{b})Z_{n-m} &= \sum_{{u}^*_{n-m}} \tilde{p}_{n-m-1}({u}^*_{n-m}, \mathbf{b}) \hat{p}({u}^*_{n-m}) \\ \nonumber
		&= \sum_{{u}^*_{n-m}} \mathds{1}_{\mathbf{A}[n-m,]\mathbf{b}}({u}^*_{n-m})\tilde{p}_{n-m-1}( \mathbf{b}) \hat{p}({u}^*_{n-m}) \\ \label{eq:lastPn-m}
		&= \sum_{{u}^*_{n-m}} \mathds{1}_{\mathbf{A}[n-m,]\mathbf{b}}({u}^*_{n-m}) \dfrac{\sum_{\mathbf{u}^*_{-1}} p_{BU}(\mathbf{u}^*_{-1}, \mathbf{b}) \hat{p}(\mathbf{u}^*_{-1})}{Z_{n-m-1}}\hat{p}({u}^*_{n-m}),
	\end{align}
	where we denote by $Z_{n-m-1}$ the normalizing constant for the reconciled distribution after $n-m-1$ updates. Note that $p_{BU}(\mathbf{u}^*_{-1}, \mathbf{b}) = \mathds{1}_{\mathbf{A}[-1,]\mathbf{b}}(\mathbf{u}^*_{-1}) \hat{p}(\mathbf{b})$, where $\mathbf{A}[-1,]$ indicates all but the last row of the matrix $\mathbf{A}$. Moreover 
	$$\mathds{1}_{\mathbf{A}[-1,]\mathbf{b}}(\mathbf{u}^*_{-1}) \mathds{1}_{\mathbf{A}[n-m,]\mathbf{b}}({u}^*_{n-m}) = \mathds{1}_{\mathbf{A}\mathbf{b}}({u}^*),$$ 
	if $\mathbf{u}^* = [ \mathbf{u}^*_{-1} \ u^*_{n-m}]$. Therefore the double sum in equation \eqref{eq:lastPn-m} can be simplified as
	\begin{align*}
		\tilde{p}_{n-m}(\mathbf{b})Z_{n-m} &= \dfrac{\sum_{\mathbf{u}^* = [\mathbf{u}^*_{-1} \ {u}^*_{n-m}]} \mathds{1}_{\mathbf{A}\mathbf{b}}(\mathbf{u}^*) \hat{p}(\mathbf{b}) \hat{p}(\mathbf{u}^*_{-1})\hat{p}({u}^*_{n-m})}{Z_{n-m-1}} \\
		&= \dfrac{\sum_{\mathbf{u}^* = [\mathbf{u}^*_{-1} \ {u}^*_{n-m}]} p_{BU}(\mathbf{u}^*, \mathbf{b}) \hat{p}(\mathbf{u}^*_{-1})\hat{p}({u}^*_{n-m})}{Z_{n-m-1}}
	\end{align*}
	Since the upper time series forecast is conditionally independent we have $\hat{p}(\mathbf{u}^*) = \hat{p}(\mathbf{u}^*_{-1})\hat{p}({u}^*_{n-m})$ and thus we obtain the result. 
	
\end{proof}

\end{document}